	\newlength{\savedparindent}
	\newcommand{\seqrule}{\footnotesize\texttt}	
	\newcommand{\Theproof}[2]{\Hypo{\langle#1\rangle}\Infer[rulestyle=none]{1}{#2}}
\theoremstyle{plain}
\newtheorem{lemma}[theorem]{Lemma}
\newtheorem{corollary}[theorem]{Corollary}
\theoremstyle{definition}
\newtheorem{definition}[theorem]{Definition}
\newtheorem{example}[theorem]{Example}
\newtheorem{notation}[theorem]{Notation}
\theoremstyle{remark}
\newtheorem{remark}[theorem]{Remark}
\newenvironment{acknowledgement}
{\vskip\bigskipamount\noindent
\textcolor{darkgray}{\fontsize{9.5}{12.5}\sffamily\bfseries Acknowledgments\enskip}\fontsize{9.5}{12.5}}
{}
\renewcommand{\neg}{\overline}
\newcommand{\labwith}[1]{\with\!_{#1}}
\newcommand{\rlabwith}[1]{\rwith_{#1}}
\newcommand{\tlabwith}[1]{\twith_{#1}}
\newcommand{\protectobdd}{{\sf\textsf{\textsuperscript oBDD}}\xspace}
\newcommand{\obdd}[1]{%
		\ifthenelse{\equal{}{#1}}%
		{{\sf\textsf{\textsuperscript oBDD}}}%
		{{\sf\textsf{\textsuperscript oBDD}\!{\footnotesize $/#1$}}}%
	}
\newcommand{\bindd}{{\sf \textsf{BDD}}\xspace}
\newcommand{\aco}{\compclass{AC}$_0$\xspace}
\newcommand{\one}{\unit}
\newcommand{\ineg}[1]{\widehat{#1}}
\newcommand{\idnf}[1]{#1_{\tt m}}
\newcommand{\itef}[3]{{\tt If}\:#1\:{\tt Then}\:#2\:{\tt Else}\:#3}
\newcommand{\dcare}[2]{{\tt DontCare}\:#1\:{\tt Then}\:#2}
\newcommand{\ite}{\bindd}
\newcommand{\bdd}{\BDD}
\newcommand{\BDD}{\bindd}
\newcommand{\mallm}{{\bf\textsf{MALL}}\textsuperscript{\!\bf-}\xspace}
\newcommand{\mllm}{{\bf\textsf{MLL}}\textsuperscript{\!\bf-}\xspace}
\newcommand{\mll}{{\bf\textsf{MLL}}\xspace}
\newcommand{\lalpha}{\alpha^{\tt l}}
\newcommand{\ralpha}{\alpha^{\tt r}}
\newcommand{\zero}{\mathbf 0}
\newcommand{\Hughes}{D.~Hughes\xspace}
\newcommand{\VGlab}{R.~van~Glabbeek\xspace}
\newcommand{\twith}{\rwith}
\newcommand{\slfont}{\mathcal}
\newcommand{\bslicing}[1]{\slfont B_{#1}}
\newcommand{\mslicing}[1]{\slfont B_{#1}}
\newcommand{\slicing}[1]{\slfont S_{#1}}
\newcommand{\malleq}{\sim}
\newcommand{\booleq}{\sim}
\newcommand{\sliceeq}{\sim}
\newcommand{\nalpha}{\lneg\alpha}
\newcommand{\nbeta}{\lneg\beta}
\newcommand{\ngamma}{\lneg\gamma}
\newcommand{\ndelta}{\lneg\delta}
\newcommand{\lbeta}{\beta^{\tt l}}
\newcommand{\rbeta}{\beta^{\tt r}}
\newcommand{\seqv}[1]{\Gamma^{#1}}
\newcommand{\seqvd}[1]{\Delta^{#1}}
\newcommand{\monof}[1]{\mathbf{#1}}
\newcommand{\pdnf}[1]{\pi_{#1}}
\newcommand{\pwith}[1]{\pi_{\rwith}^{#1}}
\newcommand{\pleft}{\pi_\mathbf{0}}
\newcommand{\pright}{\pi_\mathbf{1}}
\newcommand{\modulo}{\emph{modulo}\xspace}
\newcommand{\cut}{\texttt{cut}\xspace}
\newcommand{\locit}[1]{#1}
\newcommand{\via}{\locit{via}~}
\newcommand{\wloss}{\locit{w.l.o.g.}~}
\newcommand{\etc}{\locit{etc.}\xspace}
\newcommand{\compclass}[1]{{\sf\textbf{#1}}}		
\newcommand{\problem}[1]{{\bf{#1}}}		
\newcommand{\eqdef}{=}							
\newcommand{\extlist}[2]{#1,\,\dots\,,#2}					
\newcommandx{\extset}[3][1=]{#1\{#2,\,\dots\,,#3#1\}}		
\newcommandx{\set}[3][1=]{#1\{\:#2 \ #1|\ #3\:#1\}}		
\newcommand{\void}{\varnothing}									
\newcommand{\boolf}{\mathbf B}								
\newcommand{\unit}{\mathbf 1}								
\newcommand{\mall}{{\bf\textsf{MALL}}\xspace}
\newcommand{\lneg}[1]{{#1}^\star}				
\newcommand{\rwith}{\binampersand}
\newcommand{\rparr}{\bindnasrepma}
\renewcommand{\with}{\,\binampersand\,}
\renewcommand{\parr}{\,\bindnasrepma\,}
\newcommand{\plus}{\oplus}
\newcommand{\lplus}{\oplus_{\tt l}}
\newcommand{\rplus}{\oplus_{\tt r}}
\newcommand{\Logspace}{\compclass{Logspace}\xspace}						
\newcommand{\Pspace}{\compclass{Pspace}\xspace}
\newcommand{\coNP}{\compclass{coNP}\xspace}
\newcommand{\NP}{\compclass{NP}\xspace}
\newcommand{\p}{\textbf{\Large.}}
\begin{document}

\title{\mall proof equivalence is \Logspace-complete, \via binary decision diagrams}

\author{Marc Bagnol}

\affil{Department of Mathematics and Statistics -- University of Ottawa}

\authorrunning{M.\,Bagnol}
\Copyright{Marc Bagnol}

\subjclass{F.1.3 Complexity Measures and Classes, F.4.1 Mathematical Logic.}

\keywords{linear logic,
proof equivalence,
additive connectives,
proofnets,
binary decision diagrams,
logarithmic space,
\aco reductions.}

\maketitle

\begin{abstract}
	Proof equivalence in a logic is the problem of deciding whether two proofs are equivalent \modulo
a set of permutation of rules that reflects the commutative conversions of its cut-elimination procedure.
As such, it is related to the question of proofnets: finding canonical representatives of equivalence
classes of proofs that have good computational properties.
It can also be seen as the word problem for the notion of
free category corresponding to the logic.

It has been recently shown that proof equivalence in \mll (the multiplicative with units fragment of linear logic)
is \Pspace-complete,
which rules out any low-complexity notion of proofnet for this particular logic.

Since it is another fragment of linear logic for which attempts to define a fully satisfactory
low-complexity notion of proofnet have
not been successful so far,
we study proof equivalence in \mallm (multiplicative-additive without units fragment of linear logic)
and discover a situation that is totally different from the \mll case.
Indeed, we show that
proof equivalence in \mallm corresponds (under \aco reductions) to equivalence of binary decision diagrams,
a data structure widely used to represent and analyze Boolean functions efficiently.

We show these two equivalent problems to be \Logspace-complete. If this technically
leaves open the possibility for a complete solution to the question of proofnets for \mallm, 
the established relation with binary decision diagrams actually suggests a negative solution to this problem.

\end{abstract}

\begin{acknowledgement}
	to people from 
	\href{http://cstheory.stackexchange.com/questions/29243/what-is-the-complexity-of-the-equivalence-problem-for-read-once-decision-trees/29248#29248}{{\tt cstheory.stackexchange.com}}
	for pointing the author to the notion of \bdd;
	to \href{http://boole.stanford.edu/~dominic}{Dominic Hughes} for the live feedback during the redaction of
	the article;
	to \href{http://aubert.perso.math.cnrs.fr/}{Clément Aubert}, for his help in understanding \aco reductions.
\end{acknowledgement}

\section{Introduction}
	\subsection*{Proofnets: from commutative conversions to canonicity}\label{sec_twoface}


From the perspective of the Curry-Howard (or propositions-as-types) correspondence~\cite{Gallier1995}, 
a proof of $A\Rightarrow B$ in a logic enjoying a 
\cut-elimination procedure can be seen as a program that inputs (through the \cut rule) a proof of $A$ 
and outputs a \cut-free proof of $B$. 

Coming from this dynamic point of view, linear logic~\cite{Girard1987}
makes apparent the distinction between data that can or cannot be copied/erased \via 
its exponential modalities and retains the symmetry of classical logic:
the linear negation $\lneg{(\cdot)}$ is an involutive operation.
The study of \cut-elimination is easier in this setting thanks to the linearity constraint. However,
in its sequent calculus presentation, the \cut-elimination procedure of linear logic still
suffers from the common flaw of these type of calculi: commutative conversions.

\begin{center}
\begin{prooftree}
	\Theproof{\pi}{\lneg A , \lneg B, C,D,\Gamma}
	\Infer{1}[$\parr$]{\lneg A\parr \lneg B,C,D,\Gamma}
	\Infer{1}[$\parr$]{\lneg A\parr \lneg B,C\parr D,\Gamma}
	\Theproof{\mu}{A}
	\Theproof{\nu}{B}
	\Infer{2}[$\otimes$]{A\otimes B}
	\Infer{2}[\tt cut]{C\parr D,\Gamma}
\end{prooftree}
\quad$\rightarrow$\qquad
\begin{prooftree}
	\Theproof{\pi}{\lneg A , \lneg B, C,D,\Gamma}
	\Infer{1}[$\parr$]{\lneg A\parr \lneg B,C, D,\Gamma}
	\Theproof{\mu}{A}
	\Theproof{\nu}{B}
	\Infer{2}[$\otimes$]{A\otimes B}
	\Infer{2}[\tt cut]{C,D,\Gamma}
	\Infer{1}[$\parr$]{C\parr D,\Gamma}
\end{prooftree}
\end{center}

In the above reduction, one of the two formulas related by the \cut rule is introduced deeper in the proof,
making it impossible to perform an actual elimination step right away: one needs first to \emph{permute}
the rules in order to be able to go on.

This type of step is called a \emph{commutative conversion} and their presence complexify a lot the study of the 
\cut-elimination procedure,
as one needs to work \modulo an equivalence relation on proofs that is not {orientable} into a rewriting 
procedure in an obvious way: there are for instance situations of the form
\begin{center}
{
\begin{prooftree}
	\Theproof{\pi_1}{\lneg A,\lneg B, \Gamma}
	\Theproof{\pi_2}{A}
	\Infer{2}[\tt cut]{\vdash\lneg B,\Gamma}
	\Theproof{\pi_3}{B}
	\Infer{2}[\seqrule{cut}]{\vdash\Gamma}
\end{prooftree}
}
\quad~$\leftrightarrow$~\qquad
{
\begin{prooftree}
	\Theproof{\pi_1}{\lneg A,\lneg B, \Gamma}
	\Theproof{\pi_3}{B}
	\Infer{2}[\tt cut]{\vdash\lneg A,\Gamma}
	\Theproof{\pi_2}{A}
	\Infer{2}[\tt cut]{\vdash\Gamma}
	
\end{prooftree}
}
\end{center}
where it is not possible to favor one side of the equivalence without further non-local knowledge of the proof.
The point here is that, as a language for describing proofs, sequent calculus is somewhat \emph{too explicit}.
For instance, the fact that the two proofs
\begin{center}
\begin{prooftree}
	\Theproof{\pi}{A , B, C,D, \Gamma}
	\Infer{1}[$\parr$]{A\parr B,C,D, \Gamma}
	\Infer{1}[$\parr$]{A\parr B,C\parr D, \Gamma}
\end{prooftree}
\ \quad and \qquad
\begin{prooftree}
	\Theproof{\pi}{A , B, C,D, \Gamma}
	\Infer{1}[$\parr$]{A, B,C\parr D, \Gamma}
	\Infer{1}[$\parr$]{A\parr B,C\parr D, \Gamma}
\end{prooftree}
\end{center}
are different objects from the point of view of sequent calculus generates the first commutative conversion
we saw above.

A possible solution to this issue is to look for more intrinsic description of proofs, to find a language
that is more \emph{synthetic}; if possible to the point where we have no commutative conversions to perform
anymore.

Introduced at the same time as linear logic, the theory of \emph{proofnets}~\cite{Girard1987,Girard1996}
partially addresses this issue. The basic idea is to describe proofs as graphs rather than trees,
where application of logical rules become local graph construction, thus erasing some inessential sequential
informations. Indeed, the two proofs above would translate into the same proofnet:
\begin{proofnet}
\pnsomenet[R]{\large $\mathcal R_\pi$}{3cm}{1.2cm}
\pnoutfrom{R.-27}{$\Gamma$}[3.6]
\pnoutfrom{R.-50}[D]{$D$}
\pnoutfrom{R.-97}[C]{$C$}
\pnoutfrom{R.-140}[B]{$B$}
\pnoutfrom{R.-155}[A]{$A$}
\pnpar{A,B}{$A\parr B$}
\pnpar{C,D}{$C\parr D$}
\end{proofnet}
(where $\mathcal R_\pi$ is the proofnet translation of the rest of the proof) and the corresponding
commutative conversion disappears.

For the multiplicative without units fragment of linear logic (\mllm), proofnets yield an entirely satisfactory
solution to the problem, and constitute a low-complexity canonical representation of proofs based on local
operations on graphs.

By canonical, we mean here that two proofs are equivalent \modulo the permutations of rules induced by the commutative
conversions
if and only if they have the same proofnet translation.
From a categorical perspective, this means that proofnets constitute a syntactical 
presentation of the free semi-$\ast$-autonomous
category and a solution to the associated word problem~\cite{Heijltjes2014}.

Contrastingly, the linear logic community
has struggled to extend the notion of proofnets to wider fragment: even the question of 
\mll (that is, \mllm plus the multiplicative units) could not find a satisfactory answer.
A recent result~\cite{Heijltjes2014a} helps to understand this situation: proof equivalence of \mll is
actually a \Pspace-complete problem. Hence,
there is no hope for a satisfactory notion of low-complexity proofnet for this fragment%
\footnote{Of course, this applies only to the standard formulation of units: the equivalence problem
for any notion of multiplicative units enjoying less
permutations of rules could potentially still be tractable \via proofnets: see for instance the work 
of S.~Guerrini and A.~Masini~\cite{Guerrini2001} and D.~Hughes~\cite{Hughes2005a}}. 

In this article, we consider the same question, but in the case of \mallm: the multiplicative-additive
without units fragment of linear logic. Indeed, this fragment has so far
also resisted the attempts to build a notion of proofnet that at the same time characterizes proof equivalence
and has basic operations of tractable complexity: we have either canonical nets of exponential size~\cite{Hughes2005}
or tractable nets that are not canonical~\cite{Girard1996}. Therefore, it would have not been too surprising
to have a similar result of completeness for some untractable complexity class. An obvious candidate in that
respect would be \coNP: as we will see, one of these two approaches to proofnets for \mallm is related to Boolean formulas,
which equivalence problem is \coNP-complete.

It turns out in the end that this is not the case: our investigation concludes that the equivalence problem in \mallm
is \Logspace-complete under \aco reductions. But maybe more importantly, we uncover 
in the course of the proof an unexpected connexion
of this theoretical problem with a very practical issue: indeed we show that \mallm proofs are closely related
to binary decision diagrams.

	\subsection*{Binary decision diagrams}
	The problem of the representation of Boolean functions is of central importance in circuit design and has
a large range of practical applications. Over the years, binary decision diagrams (\bdd)~\cite{Bryant1986} became the
most widely used data structure to treat this question.

Roughly speaking, a \bdd is a binary tree with nodes labeled by Boolean variables and leaves labeled by
values $\zero$ and $\one$. Such a tree represents a Boolean function in the sense that once an assignment
of the variable is chosen, then following the left or right path at each node according to the value
$\zero$ or $\one$ chosen for its variable eventually leads to a leave, which is the output of the function.

This representation has many advantages which justify its popularity~\cite{Knuth2009}: 
most basic operations (negation and other
logical connectives) on \bdd can be implemented efficiently, in many practical cases the size of the \bdd representing
a Boolean function remains compact (thanks to the possibility to have shared subtrees) and once a variable
ordering is chosen they enjoy a notion of normal form.

In this article, we consider both \bdd and ordered \bdd with no sharing of subtrees and write them
as special kinds of Boolean functions
by introducing an \texttt{IfThenElse} constructor. However, when manipulating them from a complexity point of view
we will keep the binary tree presentation in mind.

	\subsection*{\aco reductions}
	To show that a problem is complete for some complexity class \compclass C, one needs to specify the notion
of reduction functions considered, and of course this needs to be a class of functions supposed to be smaller
than \compclass C itself (indeed \emph{any} problem in \compclass C is complete under \compclass C reductions).

A standard notion of reduction for the class \Logspace is (uniform)
\aco reduction~\cite{Chandra1984}, 
formally defined in terms of 
uniform circuits of fixed depth and unbounded fan-in. We will not be getting in the details
about this complexity class and, as we will consider only graph
transformations, 
we will rely on the following intuitive
principle: if a graph transformation locally replaces each vertex by a bounded number of vertices
and the replacement depends \emph{only} on the vertex considered and eventually its direct neighbors,
then the transformation is in \aco.
Typical examples of such a transformation are certain simple cases of so-called \enquote{gadget} reductions used in complexity theory
to prove hardness results.

	\subsection*{Outline of the paper}
	\Cref{sec_equiv} covers some background material on \mallm and notions of proofnet for this fragment: monomial
proofnets and the associated vocabulary for Boolean formulas and \bdd in \cref{sec_mono} and the notion of slicing 
in \cref{sec_slicing}. 
Then, we introduce in \cref{sec_bool} an intermediary notion of proof representation that will help us to relate
proofs in \mallm and \bindd.
In \cref{sec_bdd}, we prove that proof equivalence in \mallm and equivalence 
of \bindd relate to each other through \aco reductions
and that they are both \Logspace-complete.

\section{Proof equivalence in \mallm}\label{sec_equiv}
	
\newcommand{\Axiom}[1]{\Hypo{}\Infer{1}[\texttt{ax}]{#1}}

\begin{notation}
	The formulas of \mallm are built inductively from atoms which we write $\alpha,\beta,\gamma,\dots$
	their duals $\nalpha,\nbeta,\ngamma,\dots$ and the binary connectives $\parr,\otimes,\labwith x,\plus$
	\emph{(we consider that the $\twith$ connectives carry a label $x$ to simplify some reasonings, but we will
	omit it when it is not relevant)}.
	We write formulas as uppercase letters $A,B,C,\dots$ unless we want to specify they are atoms.
	\emph{Sequents} are sequences of formulas, written as greek uppercase letters $\Gamma,\Delta,\Lambda,\dots$
	such that all
	occurrences of the connective $\twith$ in a sequent carry a different label. The concatenation
	of two sequents $\Gamma$ and $\Delta$ is simply written $\Gamma,\Delta$.
\end{notation}

Let us recall the rules of \mallm\footnote{We consider a $\eta$-expanded version of \mallm,
which simplifies proofs and definitions, but the extension of our results to a version with non-atomic axioms
would be straightforward. Also, we work \modulo the exchange rule.}. We do not include the \cut rule in our study, since in a static situation
(we are not looking at the \cut-elimination procedure of \mallm) it can always be encoded \wloss
using the $\otimes$ rule.
\begin{center}
	\begin{prooftree}
		\Hypo{\alpha,\nalpha}
	\end{prooftree}
	\hfill
\begin{prooftree}
		\Hypo{\Gamma,A,B}
		\Infer{1}[$\rparr$]{\Gamma,A\parr B}
	\end{prooftree}
	\hfill
	\begin{prooftree}
		\Hypo{\Gamma,A}
		\Hypo{\Delta,B}
		\Infer{2}[$\otimes$]{\Gamma,\Delta,A\otimes B}
	\end{prooftree}
	\hfill
	\begin{prooftree}
		\Hypo{\Gamma,A}
		\Infer{1}[$\lplus$]{\Gamma,A\plus B}
	\end{prooftree}
	\hfill
	\begin{prooftree}
		\Hypo{\Gamma,B}
		\Infer{1}[$\rplus$]{\Gamma,A\plus B}
	\end{prooftree}
	\hfill
	\begin{prooftree}
		\Hypo{\Gamma,A}
		\Hypo{\Gamma,B}
		\Infer{2}[$\rlabwith x$]{\Gamma,A\labwith x B}
	\end{prooftree}
\end{center}
\textit{(by convention, we leave the axiom rule implicit to lighten notations.
Also, we will use the notation \begin{prooftree}\Theproof\pi\Gamma\end{prooftree}
for \enquote{the proof $\pi$ of conclusion $\Gamma$}.
)}

\begin{remark}\label{rem_proofs}
	Any time we will look at a \mallm proof from a complexity perspective, we will consider they are represented as
	trees with nodes corresponding to rules, labeled by the connective introduced and the sequent that is 
	the conclusion of the rule. 
\end{remark}

Two \mallm proofs $\pi$ and $\nu$ are said to be \emph{equivalent} (notation $\pi\malleq\nu$) if one can pass
from one to the other \via permutations of rules~\cite{Hughes2015}. We have an associated decision problem.

\newcommand{\mallmeq}{\mallm\problem{equ}\xspace}
\begin{definition}[\mallmeq]\label{def_mallmeq}
	\mallmeq is the decision problem: 
	\begin{center}
	{\it\enquote{Given two \mallm proofs $\pi$ and $\nu$ with the same conclusion, 
	do we have $\pi\malleq\nu$?}}
	\end{center}
\end{definition}

We will not go through all the details about this syntactic way to define proof equivalence in \mallm.
The reason
for this is that we already have an available equivalent characterization in terms of \emph{slicing}~\cite{Hughes2015}
which we review in \cref{sec_slicing}. Instead, let us focus only on the most significant case.

\begin{center}
\begin{prooftree}
	\Theproof\pi{\Gamma,A,C}
	\Theproof\mu{\Gamma,B,C}
	\Infer{2}[$\rwith$]{\Gamma,A\with B,C}
	\Theproof\nu{\Delta,D}
	\Infer{2}[$\otimes$]{\Gamma,\Delta,A\with B,C\otimes D}
\end{prooftree}
\quad{\large$\malleq$}\qquad
\begin{prooftree}
	\Theproof\pi{\Gamma,A,C}
	\Theproof\nu{\Delta,D}
	\Infer{2}[$\otimes$]{\Gamma,\Delta,A,C\otimes D}
	\Theproof\mu{\Gamma,B,C}
	\Theproof\nu{\Delta,D}
	\Infer{2}[$\otimes$]{\Gamma,\Delta,B,C\otimes D}
	\Infer{2}[$\rwith$]{\Gamma,\Delta,A\with B,C\otimes D}
\end{prooftree}
\end{center}
In the above equivalence, the $\otimes$ rule gets lifted above the $\with$ rule. But doing so, notice that
we created two copies
of $\nu$ instead of one, therefore the size of the prooftree has grown. Iterating on this observation, it is not hard to
build pairs of proofs that are equivalent, but one of which is exponentially bigger than
the other.
This is indeed where the difficulty of proof equivalence in \mallm lies. As a matter of fact, this
permutation of rules \emph{alone} would be enough to build the encoding of the equivalence problem 
of binary decision diagrams presented in \cref{sec_reduce}.

A way to attack proof equivalence in a logic, as we exposed in \cref{sec_twoface}, is to try to setup a notion of
proofnet for this logic. In the following, we will review the main two approaches to this idea in the case of \mallm:
\emph{monomial proofnets} by J.-Y.~Girard~\cite{Girard1996,Laurent2008a} and 
\emph{slicing proofnets} by \Hughes and \VGlab~\cite{Hughes2005,Hughes2015}.
We will then design an intermediate notion of \emph{\bdd slicing} that will be more suited to our needs.

	\subsection{Monomial proofnets}\label{sec_mono}
	The first attempt in the direction of a notion of proofnet for \mallm is due to J.-Y.Girard~\cite{Girard1996},
followed by a version with a full \cut-elimination procedure by O.~Laurent and R.~Maielli~\cite{Laurent2008a}.

While proofnets for multiplicative linear logic without units were introduced along linear 
logic itself~\cite{Girard1987}, extending the notion to the multiplicative-additive without units fragment
proved to be a true challenge, mainly
because of the \emph{superposition} at work in the $\twith$ rule.

Girard's idea was to represent the superposed \enquote{versions} of the proofnet by attributing a Boolean formula
(called a \emph{weight}) to each link, with one Boolean variable for each $\twith$ connective in the conclusion $\Gamma$.
To retrieve the version of the proofnet corresponding to some selection of the left/right branches
of each $\twith$, one then just needs to evaluate the Boolean formulas with the corresponding valuation of their
variables.

This is the occasion to introduce the vocabulary to speak about Boolean formulas.

\begin{definition}[Boolean formula]
	Given a finite set of variables $V=\extset{{x_1}}{x_n}$, a \emph{Boolean formula} over $V$ is inductively defined
	from the elements of $V$; the constants $\zero$ (\emph{\enquote{false}}) and $\unit$(\emph{\enquote{true}});
	the unary symbol $\neg{\cdot}$ (\emph{\enquote{negation}});
	the binary symbols $+$ and $\p$ (\emph{\enquote{sum/or/disjuction}} and 
	\emph{\enquote{product/and/conjunction}} respectively).
%
\end{definition}


We consider a syntactic
notion of \emph{equality} of Boolean formulas: for instance $\zero\p x\neq\zero$.
The real important notion, that we therefore state separately, is \emph{equivalence}: the fact that if we
replace the variables with actual values, we gets the same output.

\begin{definition}[equivalence]\label{def_booleq}
	A \emph{valuation} $v$ of $V$ is a choice of $\zero$ or $\unit$ for any element of $V$. A valuation induces a
	an \emph{evaluation} function $v(\cdot)$ from Boolean formulas over $V$ to $\{\zero,\unit\}$ in the obvious way.
	Two Boolean formulas $\phi$ and $\psi$ over $V$ are \emph{equivalent} (notation $\phi\booleq\psi$) when
	for any valuation $v$ of $V$, we have $v(\phi)=v(\psi)$.
\end{definition}

\begin{definition}[monomial]
	We write $\neg V=\extset{\neg {x_1}}{\neg x_n}$.
	A \emph{monomial} over $V$ is a Boolean formula
	of the form $y_1\p\,\dots\,\p y_k$ with $\extset{y_1}{y_k}\subseteq V \cup \neg V$.
	
	Two monomials $\monof m$ and $\monof m'$ are in \emph{conflict} if $\monof m\p\monof m'\booleq \zero$.
\end{definition}

\begin{remark}\label{rem_compat}
	Two monomials are in conflict if and only if there is a variable $x$ such that $x$ appears in one of them and $\neg x$
	appears in the other.
\end{remark}

While monomials are a specific type of Boolean formula, the \emph{binary decision diagrams}
we are about to introduce are not defined directly as Boolean formulas. Of course, they relate to each other in an 
obvious way, but having a specific syntax for binary decision diagrams will prove more convenient to solve the problems we will be facing.

\begin{definition}[\bindd]
	A \emph{binary decision diagram} (\bindd) is defined inductively as:
	\begin{itemize}
		\item The constants $\zero$ and $\one$ are \bindd
		\item If $\phi$, $\psi$ are \bindd and $X$ is either a variable, $\zero$ or $\one$, $\itef X\phi\psi$ is a \bindd
		\item If $\phi$ is a \bindd and $X$ is either a variable, $\zero$ or $\one$, $\dcare X\phi$ is a \bindd
	\end{itemize}
	Moreover, suppose we have an ordered set of variables $V=\extset{{x_1}}{x_n}$ with the convention
	that variables are listed in the reverse order: $x_n$ is first, then $x_{n-1}$, \etc
	We define a subclass of
	\bindd which we call \emph{ordered binary decision diagrams over $V$} (\obdd V) by
	restricting to the following inductive cases (we let $V'=\extset{{x_1}}{x_{n-1}}$)
	\begin{itemize}
		\item The constants $\zero$ and $\one$ are \obdd \void
		\item If $\phi$ and $\psi$ are \obdd {V'}, $\itef {x_n}\phi\psi$ is a \obdd V
		\item If $\phi$ is a \obdd{V'}, $\dcare {x_n}\phi$ is a \obdd V
	\end{itemize}
	The notions of valuation and equivalence are extended to \bindd and \obdd{} the obvious way so that
	$\dcare X\phi\,\booleq\,\phi$ and $\itef X\phi\psi\,\booleq\, X\p\phi+\neg X\p\psi$.
\end{definition}

\begin{remark}
	Any time we will look at \bdd and \obdd{} from a complexity perspective, we will consider they are represented as
	labeled trees. The cases of \texttt{If}\:$\zero$, \texttt{DontCare}\:$\one,$\dots{} will be useful to obtain
	\aco reductions in \cref{sec_bool} and \cref{sec_logspace}, since erasing a whole subpart of a graph of which we do
	not know the address in advance
	is not something that is doable in this complexity class. The absence of sharing implied by the
	tree representation is also crucial to get low-complexity reductions.
\end{remark}

\begin{example}
	The Boolean formula $x\p y\p\neg z$ is a monomial, while $x\p y +z$ and $x\p\one$ are not.
	
	The \bdd $\itef {x_2}{(\itef {x_1}\zero\one)}{\one}$ (which is \emph{not} a \obdd{\{{x_1},{x_2}\}} by the way)
	is equivalent to the Boolean formula
	${x_2}\p\neg{x_1} +\neg {x_2}$: both evaluate to $\one$ only for the valuations $\{{x_1}\mapsto \one,{x_2}\mapsto \zero\}$,
	$\{{x_1}\mapsto \zero,{x_2}\mapsto \zero\}$ and $\{{x_1}\mapsto \zero,{x_2}\mapsto \one\}$. There exist an equivalent
	\obdd{\{{x_1},{x_2}\}}: $\itef {x_2}{(\itef {x_1}\zero\one)}{(\dcare {x_1}\one)}$.
\end{example}

\newcommand{\bddequ}{\bdd\problem{equ}\xspace}
\newcommand{\obddequ}{\protectobdd\problem{equ}\xspace}
\begin{definition}\label{def_bddequ}
	\bddequ is the following decision problem: 
	\begin{center}
	{\it\enquote{given two \bdd $\phi$ and $\psi$, do we have
	$\phi\booleq\psi$?}}
	\end{center}
	\obddequ is the following decision problem: 
	\begin{center}
	{\it\enquote{given two \obdd V $\phi$ and $\psi$, do we have
	$\phi\booleq\psi$?}}
	\end{center}
\end{definition}

Girard's proofnets are called monomial because the only Boolean formulas that are allowed are monomials.
This is, as of the state of the art, the only known way to have a notion of proofnet that enjoys a satisfying 
correctness criterion.

For our purposes, we do not need to get into the technical details of monomial proofnets. Still, let
us end this section with an example of proofnet from the article by Laurent and Maielli,
where the monomial weight of a link is pictured just above it:

\begin{center}
	\includegraphics[width=9cm]{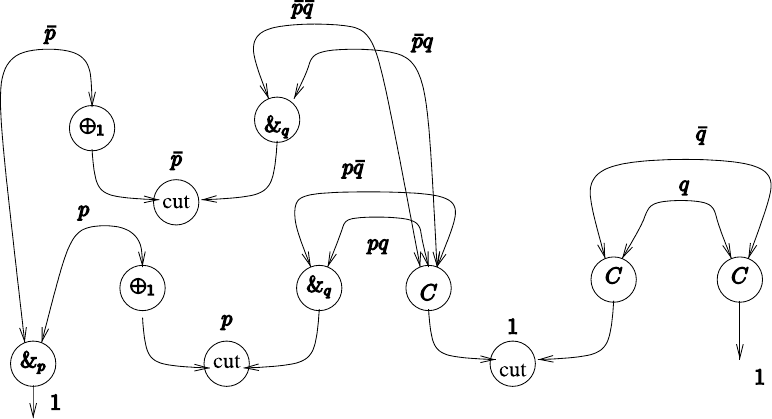}
\end{center}

	\subsection{Slicings and proof equivalence}\label{sec_slicing}
	The idea of slicing dates back to J.-Y.~Girard's original article on proofnets for \mallm~\cite{Girard1996},
and was even present in the original article on linear logic~\cite{Girard1987}.
It amounts to the
natural point of view already evoked in \cref{sec_mono}, seeing the $\twith$ rule as introducing
superposed variants of the proof,
which are eventually to be selected from in the course of \cut-elimination.
If we have two alternative \emph{slices} for each $\twith$ connective of a sequent $\Gamma$ and all
combinations of slices can be selected independently, we readily see that the global number of slices will be
exponential in the number of $\twith$ connectives in $\Gamma$.

This is indeed the major drawback of the representation of proofs as set of slices: the size of objects
representing proofs may grow exponentially in the size of the original proofs. This of course impairs any fine-grained
analysis in terms of complexity.

\begin{definition}[slicing]
	Given a \mallm sequent $\Gamma$, a \emph{linking} of $\Gamma$ is a subset
	$$\extset[\big]{[\alpha_1,\nalpha_1]}{[\alpha_n,\nalpha_n]}$$
	of 
	the set of (unordered) pairs of occurrences of dual atoms in $\Gamma$.
	
	Then, a \emph{slicing} of $\Gamma$ is a finite set of linkings of $\Gamma$.
	
	To any \mallm proof $\pi$, we associate a slicing $\slicing\pi$ by induction:

	\begin{itemize}
		\item If $\pi=\begin{prooftree}\Hypo{\alpha,\nalpha}\end{prooftree}$ then 
		      $\slicing\pi$ is the set containing only the linking $\big\{[\alpha,\nalpha]\big\}$
		
		\medskip
		\item If $\pi=\,
		      \begin{prooftree}
		      \Theproof{\mu}{\Gamma,A,B}
		      \Infer{1}[$\rparr$]{\Gamma,A\parr B}
		      \end{prooftree}$
		      \ \begin{minipage}[t]{10cm}
		      then $\slicing\pi=\slicing\mu$, where we see atoms 
		      of $A\parr B$ \\as the corresponding atoms of $A$ and $B$
		      \end{minipage}
		
		\medskip
		\item If $\pi=\,
		      \begin{prooftree}
		      \Theproof{\mu}{\Gamma,A}
		      \Theproof{\nu}{\Delta,B}
		      \Infer{2}[$\otimes$]{\Gamma,\Delta,A\otimes B}
		      \end{prooftree}$
		      then 
		      $\slicing\pi=\set{\lambda\cup\lambda'}{\lambda\in \slicing\mu\,,\,\lambda'\in\slicing\nu}$
		
		\medskip
		\item If $\pi=\,
		      \begin{prooftree}
		      \Theproof{\mu}{\Gamma,A}
		      \Infer{1}[$\lplus$]{\Gamma,A\plus B}
		      \end{prooftree}$
		      then 
		      $\slicing\pi=\slicing\mu$,
		and likewise for $\rplus$
		
		\medskip
		\item If $\pi=\,
		      \begin{prooftree}
		      \Theproof{\mu}{\Gamma,A}
		      \Theproof{\nu}{\Gamma,B}
		      \Infer{2}[$\rwith$]{\Gamma,\Delta,A\with B}
		      \end{prooftree}$
		      then 
		      $\slicing\pi=\slicing\mu \cup \slicing\nu$
	\end{itemize}
\end{definition}


\begin{remark}
	In the $\otimes$ rule, it is clearly seen that the number of slices are multiplied. This is just what is
	needed in order to have a combinatorial explosion: for any $n$, a proof $\pi_n$ of
	$$\underbrace{{\nalpha}\otimes\cdots\otimes{\nalpha}}_{n\text{ times}},
	\underbrace{\extlist{\alpha\with \alpha}{\alpha\with \alpha}}_{n\text{ times}}$$
	obtained by combining with the $\otimes$ rule $n$ copies of the proof
	\begin{prooftree*}
		\Hypo{\nalpha,\alpha}
		\Hypo{\nalpha,\alpha}
		\Infer{2}[$\rwith$]{\nalpha,\alpha\with\alpha}
	\end{prooftree*}
	will be of linear size in $n$, but with a slicing $\slicing n$ containing $2^n$ linkings.
\end{remark}

Slicings (associated to a proof) correspond exactly to the notion of proofnets elaborated by
\Hughes and \VGlab~\cite{Hughes2005}. 
While their study was mainly focused on the problems of finding a correctness criterion and
designing a \cut-elimination procedure for these, it also covers the proof equivalence problem.
The proof that their notion of proofnet characterizes \mallm proof equivalence
can be found in an independent note~\cite{Hughes2015}.

\begin{theorem}[slicing equivalence~\protect{\cite[Theorem~1]{Hughes2015}}]\label{th_equiv}
	Let $\pi$ and $\nu$ be two \mallm proofs.
	We have that $\pi\malleq\nu$ if and only if $\slicing\pi=\slicing\nu$.
\end{theorem}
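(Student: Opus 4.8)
The plan is to prove the two implications separately, exploiting that $\malleq$ is by definition the equivalence relation generated by a fixed finite repertoire of elementary rule-permutations. The forward implication then becomes a soundness statement — the slicing is invariant under each generator — while the converse is a sequentialization-style completeness statement, asserting that the slicing retains enough information to recover the proof up to $\malleq$.

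For the forward direction, since $\slicing{(\cdot)}$ is defined compositionally and each permutation rewrites a proof only locally, it suffices to check that the two sides of every elementary permutation produce the same slicing; equality then propagates through the surrounding proof context because the inductive clauses for $\slicing{(\cdot)}$ are monotone operations on sets of linkings. Reading off the clauses, the $\parr$ and $\plus$ rules act as the identity on slicings, the $\rwith$ rule acts as set-theoretic union, and the $\otimes$ rule acts as the \enquote{pairwise union} operation $\set{\lambda\cup\lambda'}{\lambda\in\slicing\mu,\ \lambda'\in\slicing\nu}$. Each permutation identity then reduces to an algebraic law of these two operations — commutativity and associativity of union, and, crucially for the $\otimes$/$\rwith$ permutation displayed above, distributivity of pairwise-union over union:
$$\set{\lambda\cup\lambda'}{\lambda\in \slicing\pi\cup\slicing\mu,\ \lambda'\in\slicing\nu} = \set{\lambda\cup\lambda'}{\lambda\in\slicing\pi,\ \lambda'\in\slicing\nu}\cup\set{\lambda\cup\lambda'}{\lambda\in\slicing\mu,\ \lambda'\in\slicing\nu}.$$
Going through the finite list of permutation shapes in this way discharges the forward direction.

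For the converse I would reconstruct, from a common slicing $\slicing\pi=\slicing\nu$, a proof that is $\malleq$-equivalent to both. The key observation is that each linking already records the additive choices that produced it: because the two branches of a $\labwith x$ contribute disjoint occurrences of atoms, the support of a linking is exactly the set of atom occurrences belonging to one global selection of a branch for every $\with$, and hence it determines a valuation of the $\with$-labels. This partitions the slicing into multiplicative slices, each of which is an \mllm proof structure; within a fixed slice one may invoke the classical \mllm sequentialization theorem, which recovers the multiplicative skeleton up to exactly the multiplicative permutations. I would then reassemble the $\rwith$ rules by induction on the conclusion sequent: a $\parr$ or $\plus$ can always be permuted to the bottom (the $\plus$ side being dictated by the slicing), a splitting $\otimes$ partitions both the sequent and the slicing, and a $\with$ corresponds to splitting the slicing according to its label — each such choice of bottom rule being available precisely up to the permutations comprising $\malleq$.

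The main obstacle is this last reassembly step: one must show that the per-slice multiplicative sequentializations can be glued along the shared context $\Gamma$ of each $\with$ in a globally coherent way, and that the resulting proof is unique up to $\malleq$ and no finer. Establishing that the reconstruction is well-defined — independent of the order in which bottom rules are extracted, and reflecting \emph{exactly} the allowed commutative conversions — is where the genuine content lies; the forward direction guarantees soundness of this correspondence, so the two together yield the claimed bijection between $\malleq$-classes and slicings.
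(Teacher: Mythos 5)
The first thing to note is that the paper does not prove this statement at all: it is imported verbatim from the literature (Theorem~1 of the cited note of Hughes and van Glabbeek), so there is no internal proof to compare yours against; your attempt has to stand on its own. Judged that way, your forward direction is correct and routine: the slicing clauses are functions of the premise slicings (so equality of slicings is a congruence with respect to replacing a subproof), and each generating permutation reduces to an algebraic identity of the two set operations involved, such as the distributivity law you display for the $\tensor$/$\twith$ case. But that half is not where the theorem lives.

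The converse is the entire content of the theorem, and there your proposal has a genuine gap, which you yourself flag: the \enquote{reassembly} of a proof from the common slicing is asserted, not proven. Two concrete problems. First, invoking \mllm sequentialization slice by slice only recovers each multiplicative skeleton up to multiplicative permutations \emph{independently within each slice}; it gives no control over how the slices cohere into a single proof. The slices are far from independent: they overlap on shared contexts, the $\tensor$/$\twith$ permutation duplicates whole subproofs (so one rule occurrence of $\pi$ is spread across exponentially many slices), and what must be excluded is precisely that two proofs agree on every slice separately while interleaving their $\twith$ rules in ways not related by permutations. That exclusion is not a consequence of per-slice statements plus \enquote{gluing}; it \emph{is} the theorem, and your sketch leaves it as an acknowledged obstacle. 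Second, even your preliminary claim that the support of a linking determines a valuation of the $\twith$-labels needs an induction (every linking of a proof of $\Gamma,A$ must contain at least one atom occurrence of $A$); this is exactly the lemma the paper has to prove inside its own \bdd-slicing equivalence theorem, so it cannot be waved through. In short: a sound but easy soundness half, plus a completeness plan whose central step (\enquote{glued in a globally coherent way, unique up to $\malleq$ and no finer}) restates rather than proves the hard direction.
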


Let us also end this section
with a graphical representation of an example of proofnet from the article of Hughes and van Glabbeek,
encoding the proof on the left
with three linkings $\lambda_1, \lambda_2, \lambda_3$:

\medskip
\begin{center}
	\scalebox{0.75}{
	\begin{prooftree}[center=false]
		\Hypo{\lneg P,P}
		\Infer{1}[$\lplus$]{\lneg P\oplus \lneg Q,P}
		\Infer{1}[$\lplus$]{(\lneg P\oplus \lneg Q)\oplus \lneg R,P}
		\Hypo{\lneg Q,Q}
		\Infer{1}[$\rplus$]{\lneg P\oplus \lneg Q,Q}
		\Infer{1}[$\lplus$]{(\lneg P\oplus \lneg Q)\oplus \lneg R,P}
		\Infer{2}[$\with$]{(\lneg P\oplus \lneg Q)\oplus \lneg R,P\with Q}
		\Hypo{\lneg R,R}
		\Infer{1}[$\rplus$]{(\lneg P\oplus \lneg Q)\oplus \lneg R,P}
		\Infer{2}[$\with$]{(\lneg P\oplus \lneg Q)\oplus \lneg R,(P\with Q)\with R}
	\end{prooftree}
	}
	\hfill
	\scalebox{0.8}{
	\includegraphics[width=5cm]{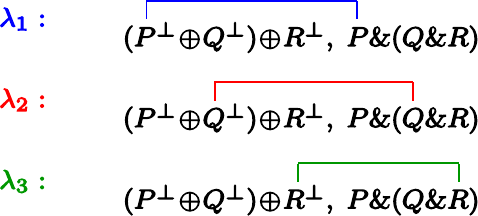}
	}
\end{center}

	\subsection{\bdd slicings}\label{sec_bool}
	We finally introduce an intermediate notion of representation of proofs which will be a central tool
in the next section.
In a sense, it is a synthesis of monomial proofnets and slicings: acknowledging the fact that slicing makes
the size of the representation explode, we rely on \bdd to keep things more compact.

Of course, the canonicity property is lost. But this is exactly the point! Indeed, deciding whether two
\enquote{\bdd slicings} are equivalent is the reformulation of proof equivalence in \mallm we 
rely on
in the reductions between \mallmeq (\cref{def_mallmeq}) and \bddequ (\cref{def_bddequ}).

\begin{definition}[\bdd slicing]\label{def_boolslicing}
	Given a \mallm sequent $\Gamma$, a \emph{\bdd slicing} of $\Gamma$ is a function $\slfont B$ 
	that associates a \bdd to every element $[\gamma,\ngamma]$
	of 
	the set of (unordered) pairs of occurrences of dual atoms in $\Gamma$.

	We say that two \bdd slicings $M,N$ of the same $\Gamma$ are \emph{equivalent}
	(notation $M\sliceeq N$) if for any pair $[\gamma,\ngamma]$,
	we have $M[\gamma,\ngamma]\booleq N[\gamma,\ngamma]$ in the sense of \cref{def_booleq}.
	
	To any \mallm proof $\pi$, we associate a \bdd slicing $\mslicing\pi$ by induction:
	\begin{itemize}
		\item If $\pi=\begin{prooftree}\Hypo{\alpha,\nalpha}\end{prooftree}$ then 
		      $\mslicing\pi[\alpha,\nalpha]=\unit$.
		
		\medskip
		\item If $\pi=\,
		      \begin{prooftree}
		      \Theproof{\mu}{\Gamma,A,B}
		      \Infer{1}[$\rparr$]{\Gamma,A\parr B}
		      \end{prooftree}$
		      \ \begin{minipage}[t]{10.2cm}
		      then $\mslicing\pi[\gamma,\ngamma]=\mslicing\mu[\gamma,\ngamma]$ where we see atoms 
		      of $A\parr B$ as the corresponding atoms of $A$ and $B$
		      \end{minipage}
		
		\medskip
		\item If $\pi=\,
		      \begin{prooftree}
		      \Theproof{\mu}{\Gamma,A}
		      \Theproof{\nu}{\Delta,B}
		      \Infer{2}[$\otimes$]{\Gamma,\Delta,A\otimes B}
		      \end{prooftree}$
		      then 
		      $\mslicing\pi[\gamma,\ngamma]=
		      \begin{cases}
		      \mslicing\mu[\gamma,\ngamma] &\text{ if $\gamma,\ngamma$ are atoms of $\Gamma,A$}\\
		      \mslicing\nu[\gamma,\ngamma] &\text{ if $\gamma,\ngamma$ are atoms of $\Delta,B$}\\
		      \zero &\text{ otherwise\footnotemark}
		      \end{cases}
		      $

		\medskip
		\item If $\pi=\,
		      \begin{prooftree}
		      \Theproof{\mu}{\Gamma,A}
		      \Infer{1}[$\lplus$]{\Gamma,A\plus B}
		      \end{prooftree}$
\begin{minipage}[t]{10cm}
		      then
		      $\mslicing\pi[\gamma,\ngamma]=
		      \begin{cases}
		      \mslicing\mu[\gamma,\ngamma] &\text{ if $\gamma,\ngamma$ are atoms of $\Gamma,A$}\\
		      \zero &\text{ otherwise}
		      \end{cases}
		      $
		      
		      and likewise for $\rplus$.
\end{minipage}


		\medskip
		\item If $\pi=\,
		      \begin{prooftree}
		      \Theproof{\mu}{\Gamma,A}
		      \Theproof{\nu}{\Gamma,B}
		      \Infer{2}[$\rlabwith x$]{\Gamma,\Delta,A\labwith x B}
		      \end{prooftree}$
		      then
		      
		      \hfill$\mslicing\pi[\gamma,\ngamma]=
		      \begin{cases}
		      \itef x{\mslicing\mu[\gamma,\ngamma]}\zero &\text{ if $\gamma$ or $\ngamma$ is an atom of $A$}\\
		      \itef x\zero{\mslicing\nu[\gamma,\ngamma]} &\text{ if $\gamma$ or $\ngamma$ is an atom of $B$}\\
		      \itef x{\mslicing\mu[\gamma,\ngamma]}{\mslicing\nu[\gamma,\ngamma]} &\text{ otherwise}
		      \end{cases}
		      $
	\end{itemize}
\end{definition}
\footnotetext{Remember we consider \emph{occurrences} of atoms, and as the $\otimes$ rule splits the
context into two independent parts that and no axiom rule can cross this splitting.}

\newcommand{\tparr}{\!\parr\!}

\begin{remark}
	The \bdd we obtain this way are actually of a specific type: they are usually called \emph{read-once}
	\bdd: from the root to any leave, one never
	crosses two \texttt{IfThenElse} nodes asking for the value of 
	the same variable.
\end{remark}

\begin{example}
	The weight of the pairs $[\alpha,\nalpha]$ and $[\delta,\ndelta]$ in the \bdd slicing of the proof
	\begin{center}
	$\pi=\ $
	\begin{prooftree}
	\Hypo{\alpha,\nalpha}
	\Infer{1}[$\lplus$]{\alpha\oplus\beta,\nalpha}
	\Hypo{\beta,\nbeta}
	\Infer{1}[$\rplus$]{\alpha\oplus\beta,\nbeta}
	\Infer{2}[$\rlabwith x$]{\alpha\oplus\beta,\nalpha\labwith x\nbeta}
	\Hypo{\delta,\ndelta}
	\Infer{2}[$\otimes$]{\alpha\oplus\beta,(\nalpha\labwith x\nbeta)\otimes\delta,\ndelta}
	\end{prooftree}
	\end{center}
	are $\mslicing\pi[\alpha,\nalpha]=\itef x{\one}{\zero}$ and 
	$\mslicing\pi[\delta,\ndelta]=\one$.
\end{example}

It is not hard to see that proof equivalence matches the equivalence of \bdd slicings by relating
them to slicings from the previous section.

\begin{theorem}[\bdd slicing equivalence]
	Let $\pi$ and $\nu$ be two \mallm proofs.
	We have that $\pi\malleq\nu$ if and only if $\mslicing\pi\sliceeq\mslicing\nu$.
\end{theorem}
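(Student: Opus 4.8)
The plan is to route everything through the slicing characterisation of \cref{th_equiv}. Since that theorem already gives $\pi\malleq\nu$ \iff $\slicing\pi=\slicing\nu$, it suffices to prove that, for proofs $\pi,\nu$ sharing a conclusion $\Gamma$, one has $\slicing\pi=\slicing\nu$ if and only if $\mslicing\pi\sliceeq\mslicing\nu$. The bridge between the two notions is evaluation: to a valuation $v$ of the $\with$-labels of $\Gamma$ and a \bdd slicing $\slfont B$, I associate the linking $\lambda_v(\slfont B)=\set{[\gamma,\ngamma]}{v(\slfont B[\gamma,\ngamma])=\unit}$, keeping exactly the pairs whose weight evaluates to true. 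The whole argument then consists in understanding how this evaluation map relates $\mslicing{}$ to $\slicing{}$.

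First I would prove, by induction on $\pi$, the following Main Lemma: $\slicing\pi=\set{\lambda_v(\mslicing\pi)}{v\text{ a valuation}}$, and moreover $\lambda_v(\mslicing\pi)$ is precisely the axiom linking of the slice of $\pi$ selected by $v$, namely a perfect matching on the set $\sigma_v$ of atom occurrences of $\Gamma$ that survive the branch choices recorded by $v$ (an atom survives $v$ when, for every $\with$ above it, the corresponding component of $v$ points toward it). The base case and the $\parr$, $\plus$ cases are immediate. The $\otimes$ case matches the product of slicings against the disjoint splitting of $\with$-labels between the two premises together with the $\zero$ weight placed on pairs straddling the two sides. The $\with$ case is where the $\itef x{\cdot}\zero$, $\itef x\zero{\cdot}$, and $\itef x{\cdot}{\cdot}$ wrappings do their work: conditioning on the value assigned to $x$ exactly selects the left or right premise, so that $\set{\lambda_v(\mslicing\pi)}{v}$ splits as $\set{\lambda_{v}(\mslicing\mu)}{v}\cup\set{\lambda_{v}(\mslicing\nu)}{v}$, matching $\slicing\pi=\slicing\mu\cup\slicing\nu$. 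Two facts make the \emph{surviving/perfect matching} part go through and should be stated explicitly: $\eta$-expansion guarantees that every branch of a $\with$ contains at least one atom (so distinct surviving branches give distinct surviving sets), and cut-freeness guarantees that every surviving atom is axiom-linked exactly once (so $\lambda_v(\mslicing\pi)$ is total on $\sigma_v$).

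From the Main Lemma, unfolding \cref{def_booleq} shows that $\mslicing\pi\sliceeq\mslicing\nu$ holds exactly when $\lambda_v(\mslicing\pi)=\lambda_v(\mslicing\nu)$ for \emph{every} valuation $v$, i.e. a pointwise equality of the two valuation-indexed families of linkings; whereas $\slicing\pi=\slicing\nu$ is only the equality of their \emph{images} as sets. The pointwise-to-set direction is trivial, so the crux, and the main obstacle, is the converse: recovering pointwise agreement from mere set equality. The key observation that closes this gap is that a slice carries its own index: since $\lambda_v(\mslicing\pi)$ is a perfect matching on $\sigma_v$, its support $\bigcup\lambda_v(\mslicing\pi)$ equals $\sigma_v$, and this support-to-surviving-set recovery depends only on $\Gamma$, not on the proof. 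Concretely, assuming $\slicing\pi=\slicing\nu$ and fixing $v$, we have $\lambda_v(\mslicing\pi)\in\slicing\nu$, hence $\lambda_v(\mslicing\pi)=\lambda_{v'}(\mslicing\nu)$ for some $v'$; taking supports forces $\sigma_v=\sigma_{v'}$; and since equal surviving sets impose the same choice at every surviving $\with$ (here the non-emptiness of branches from $\eta$-expansion is used), $v$ and $v'$ select the same slice of $\nu$, whence $\lambda_{v'}(\mslicing\nu)=\lambda_v(\mslicing\nu)$. Thus $\lambda_v(\mslicing\pi)=\lambda_v(\mslicing\nu)$ for all $v$, giving $\mslicing\pi\sliceeq\mslicing\nu$. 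Chaining this equivalence with \cref{th_equiv} yields the theorem.
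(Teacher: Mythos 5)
Your overall route coincides with the paper's: reduce to slicings via \cref{th_equiv}, introduce the evaluation map $\lambda_v$ (the paper's $v(\cdot)$), prove $\slicing\pi=\set{\lambda_v(\mslicing\pi)}{v\text{ valuation}}$ by induction, observe that pointwise equality trivially gives set equality, and close the converse by recovering the valuation from the linking itself. The gap lies in that last, crucial recovery step: your claim that $\lambda_v(\mslicing\pi)$ is a \emph{perfect} matching on the surviving set $\sigma_v$ is false, and with it the inference \enquote{taking supports forces $\sigma_v=\sigma_{v'}$}. The culprit is not the $\with$ rule but the $\oplus$ rules: $\lplus$ and $\rplus$ discard a whole subformula whose atoms are never touched by any axiom, yet those atoms do belong to $\sigma_v$, since your notion of survival quantifies only over the $\with$'s above an atom and is explicitly independent of the proof. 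Already for the proof of $\alpha\oplus\beta,\nalpha$ obtained by $\lplus$ from an axiom, the unique linking has support $\{\alpha,\nalpha\}$ while $\sigma_v=\{\alpha,\beta,\nalpha\}$; neither cut-freeness nor $\eta$-expansion gives totality on $\sigma_v$. Worse, the support is a \emph{proof-dependent} subset of $\sigma_v$, so equal supports do not force equal surviving sets: for $(\alpha\labwith{x}\alpha)\oplus\beta,\nbeta$ proved by $\rplus$ from an axiom, the variable $x$ never appears in any weight, every valuation yields the same linking $\{[\beta,\nbeta]\}$ with the same support, yet $\sigma_v\neq\sigma_{v'}$ whenever $v(x)\neq v'(x)$. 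So the step from set equality of slicings to pointwise equality, which you correctly identify as the crux, is not established.

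What is true, and is what the paper proves by induction, is a weaker witnessing property: in the slice of $\pi$ selected by $v$, every $\with_x$ rule that \emph{actually occurs in that slice} has at least one axiom pair with an atom inside its kept side, and no pair touching its discarded side. This recovers $v(x)$ from the linking for every $\with_x$ that is \emph{active} in the slice (a proof-dependent notion), rather than for every $\with_x$ \enquote{surviving} at the level of $\Gamma$; and since two valuations agreeing on the variables active in a slice select the same slice, this weaker property suffices to conclude $\lambda_{v'}(\mslicing\nu)=\lambda_v(\mslicing\nu)$ and finish the proof. Your argument can be repaired by replacing the perfect-matching claim with this active-$\with$ witnessing lemma (provable by the same induction you set up, with the $\oplus$ cases handled explicitly), but as written the key step fails.
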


\begin{proof} We show in fact that $\slicing\pi=\slicing\nu$ if and only if $\bslicing\pi\booleq\bslicing\nu$,
	with \cref{th_equiv} in mind.
	\newcommand{\mB}{\mathcal B}
	
	To a \bdd slicing $\mB$, we can associate a linking $v(\mB)$ for each valuation $v$ of the variables
	occurring in $\mB$ by setting
	$v(\mB)=\set{[\alpha,\nalpha]}{v(\mathcal B[\alpha,\nalpha])=\one}$ and then a slicing
	$f(\mB)=\set{v(\mB)}{v \text{ valuation}}$. By definition, it is clear that if $\mB$ and $\mB'$
	involve the same variables and $\mB\sliceeq\mB'$ then $f(\mB)=f(\mB')$.
	
	Conversely, suppose $\bslicing\pi\not\booleq\bslicing\nu$, so that there is a $v$ such that
	$v(\bslicing\pi)\neq v(\bslicing\nu)$. To conclude that $f(\bslicing\pi)\neq f(\bslicing\nu)$, we must show 
	that there is no other $v'$ such that $v'(\bslicing\nu)=v(\bslicing\pi)$.
	
	To do this, we can extend the notion of valuation to proofs: if $v$ is a valuation of
	the labels $x$ of the $\tlabwith x$ in $\pi$, $v(\pi)$ is defined by keeping only the left or right
	branch of $\tlabwith x$ according to the value of $x$.
	Now we can consider the set $v^{\tt ax}(\pi)$ of axiom rules in $v(\pi)$
	and we can show by induction that $v^{\tt ax}(\pi)$ must contain at least one
	pair with one atom which is a subformula of the side of each $\tlabwith x$ that has been kept. Therefore for any
	$\pi$ and $\nu$ with the same conclusion, if $v\neq v'$ we have $v^{\tt ax}(\nu)\neq v'^{\tt ax}(\pi)$
	no matter what.
	Then we can remark that $v^{\tt ax}(\pi)$ is just another name for $v(\bslicing\pi)$
	so that in the end, there cannot be $v\neq v'$ such that 
	$v'(\bslicing\nu)=v(\bslicing\pi)$.
	
	Finally, an easy induction shows that $f(\bslicing\pi)=\slicing\pi$ and therefore we are done.
\end{proof}


Also, a \bdd equivalent to the \bdd associated to a pair can be computed in \aco.

\begin{lemma}[computing \bdd slicings]\label{lem_bsl}
	For any \mallm proof $\pi$ and any pair $[\gamma,\ngamma]$,
	we can compute in \aco a \bdd $\phi$ such that $\phi\booleq\bslicing \pi[\gamma,\ngamma]$.
\end{lemma}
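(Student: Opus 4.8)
The plan is to read the inductive definition of the \bdd slicing (\cref{def_boolslicing}) as describing a traversal of the proof tree that follows the two occurrences $\gamma,\ngamma$, and to turn this traversal into a single shape-preserving relabeling of $\pi$ that a bounded-fan-in constant-depth circuit can perform in one pass. Concretely, I would produce a \bdd $\phi$ whose underlying tree is exactly that of $\pi$, obtained by replacing each rule-node of $\pi$ by one (or a bounded number) of \bdd constructors, so that the root of the relabeled tree is the desired $\phi$.

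The replacement rules mirror the five cases of \cref{def_boolslicing}: an axiom node is replaced by the leaf $\one$ when its two atoms descend to $\gamma$ and $\ngamma$ and by $\zero$ otherwise; a $\labwith{x}$ node becomes $\itef{x}{L}{R}$, where $L,R$ are the relabelings of its two premises; a $\otimes$ node becomes the constant-guarded test $\itef{\one}{L}{R}$ or $\itef{\zero}{L}{R}$ according to whether the side carrying both of $\gamma,\ngamma$ is the left or the right premise; and a $\parr$, $\lplus$ or $\rplus$ node becomes a $\dcare{\zero}{L}$ when the tracked occurrences survive in its premise, and is neutralized (e.g. as $\itef{\zero}{L}{\zero}$) when one of them lies on a dead/erased side. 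The essential point --- and the reason the $\itef{\zero}{\cdot}{\cdot}$, $\itef{\one}{\cdot}{\cdot}$ and $\dcare{\cdot}{\cdot}$ constructors were introduced --- is that at $\otimes$, $\plus$ and dead-side cases we must \emph{select} or \emph{discard} a branch without actually deleting the corresponding subtree, since erasing a subgraph whose address is not known in advance is not available in \aco; instead we keep the subtree attached and neutralize it with a constant guard.

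Correctness is then a routine induction on $\pi$ showing the relabeling is $\booleq$-equivalent to $\bslicing{\pi}[\gamma,\ngamma]$, using the identities $\itef{\one}{\phi}{\psi}\booleq\phi$, $\itef{\zero}{\phi}{\psi}\booleq\psi$ and $\dcare{X}{\phi}\booleq\phi$. The only case needing a small observation is the $\otimes$ rule when $\gamma,\ngamma$ are split between the two premises: there the definition yields $\zero$, and indeed each relabeled premise is already $\booleq\zero$ because one of the two tracked atoms is absent from it (no axiom can link an atom that does not occur), so any constant guard returns a $\zero$-equivalent value. One also checks that the result is a genuine read-once \bdd: on every root-to-leaf path the variables guarding the $\itef{\cdot}{\cdot}$ nodes are pairwise distinct, since the labels of the $\twith$ rules along a branch of $\pi$ are distinct.

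The remaining work, and the main obstacle, is to verify that every local decision above is genuinely \aco, i.e. depends only on a node together with its direct neighbors. Two points must be settled. First, at each node we must know \emph{where} the tracked occurrences $\gamma,\ngamma$ sit, which a priori is threading information propagated from the root; this is resolved by the subformula property of cut-free \mallm proofs, so that every formula occurring at a node is a subformula of a formula of the fixed conclusion and the tracked occurrences are recognizable locally by their address in that conclusion (and, at a $\otimes$, the left/right choice is read off by comparing the two immediate premises). Second, we must never prune: this is exactly what the constant-guarded tests and $\dcare{\cdot}{\cdot}$ achieve, keeping the whole tree and replacing each node by a bounded number of \bdd nodes depending only on its rule, its labeled conclusion and its premises. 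By the locality principle for \aco graph transformations recalled earlier, the relabeling is therefore in \aco.
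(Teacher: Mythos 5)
Your proposal is correct and takes essentially the same approach as the paper: a local, shape-preserving relabeling of the proof tree (axioms to constants, $\otimes$ to constant-guarded $\itef{\cdot}{\cdot}{\cdot}$ nodes, $\tlabwith{x}$ to $\itef{x}{\cdot}{\cdot}$, $\parr$ and $\oplus$ to \texttt{DontCare} nodes), justified by the same key observation that a subtree missing one of the tracked occurrences is automatically $\booleq\zero$, so no pruning is needed and the transformation is \aco. Your minor deviations (the extra neutralization case for $\oplus$, which the paper handles uniformly \via that same $\zero$-equivalence observation) are immaterial.
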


\begin{proof} As we see proofs as labeled trees (\cref{rem_proofs}), we will only locally replace
the rules of the proof the following way to obtain the corresponding \bdd $\phi$:
\begin{itemize}
	\item Replace axiom rules \begin{prooftree}\Hypo{\gamma,\ngamma}\end{prooftree} by $\one$ and other axiom
	rules by $\zero$
	\item Replace all $\tparr$ and $\oplus$ rules by $\dcare\one{\cdot}$ nodes
	\item In the $\otimes$ case, test which side the atoms $\gamma,\ngamma$ are attributed to 
	\emph{(this can be done locally by looking at the conclusions of the premise of the rule)}
	and replace it by
	a $\itef \zero{\cdot}{\cdot}$ or a $\itef \one{\cdot}{\cdot}$ node accordingly
	\item Replace $\tlabwith x$ rules by a $\itef x{\cdot}{\cdot}$ nodes
\end{itemize}
We can see by induction that the resulting \bdd is equivalent to $\bslicing\pi[\gamma,\ngamma]$.
All these operations can be performed by looking only at the rule under treatment (and its immediate neighbors
in the case of $\otimes$) and always replaces one rule by exactly one node. Therefore it is in \aco.
\end{proof}

\begin{corollary}[reduction]
	\mallmeq reduces to \bddequ in \aco.
\end{corollary}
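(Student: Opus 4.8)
The plan is to chain together the two results just proved. By the \bdd slicing equivalence theorem we have $\pi\malleq\nu$ \iff $\mslicing\pi\sliceeq\mslicing\nu$, and by \cref{def_boolslicing} the latter means precisely that $\bslicing\pi[\gamma,\ngamma]\booleq\bslicing\nu[\gamma,\ngamma]$ for \emph{every} pair $[\gamma,\ngamma]$ of dual atom occurrences of the common conclusion $\Gamma$. By \cref{lem_bsl} each of these finitely many \bdd can be produced in \aco. What is left to do --- and what constitutes the actual content of the corollary --- is to fold these polynomially many per-pair equivalence tests into a single instance of \bddequ, without leaving \aco.

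First I would fix an \aco-computable enumeration $[\gamma_1,\ngamma_1],\dots,[\gamma_k,\ngamma_k]$ of the pairs of dual atom occurrences of $\Gamma$ (ordering them by position), and apply \cref{lem_bsl} to each, obtaining \bdd $\phi_i\booleq\bslicing\pi[\gamma_i,\ngamma_i]$ and $\psi_i\booleq\bslicing\nu[\gamma_i,\ngamma_i]$; recall that the only variables occurring in them are the labels of the $\twith$ connectives of $\Gamma$. I would then introduce $k$ \emph{fresh} selector variables $s_1,\dots,s_k$ and glue the pieces along an \texttt{IfThenElse}-spine, setting
\[ \Phi \;=\; \itef{s_1}{\phi_1}{(\itef{s_2}{\phi_2}{(\cdots \itef{s_k}{\phi_k}{\zero} \cdots)})} \]
and defining $\Psi$ identically from the $\psi_i$. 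The reduction outputs the pair $(\Phi,\Psi)$.

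Correctness is a one-line evaluation argument: under an arbitrary valuation $v$, let $i$ be the least index with $v(s_i)=\one$; then both diagrams select their $i$-th component, so $v(\Phi)=v(\phi_i)$ and $v(\Psi)=v(\psi_i)$, while if no selector is true both evaluate to $\zero$. Since the selectors are fresh and shared between $\Phi$ and $\Psi$, this shows $\Phi\booleq\Psi$ \iff $\phi_i\booleq\psi_i$ for all $i$, which is exactly $\mslicing\pi\sliceeq\mslicing\nu$, \ie $\pi\malleq\nu$. Hence $(\Phi,\Psi)$ is a positive instance of \bddequ precisely when $(\pi,\nu)$ is a positive instance of \mallmeq.

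For the complexity bound, the per-pair diagrams are computed in \aco by \cref{lem_bsl} and there are only polynomially many pairs; the remaining work is building the spine, which attaches to each pair $i$ a single \texttt{IfThenElse} node on $s_i$ whose \enquote{then} child is the root of $\phi_i$ and whose \enquote{else} child is the node built for pair $i+1$. This is a local, sharing-free graph construction once the ordering of pairs is fixed, so it fits the \aco principle recalled in the introduction. The step I expect to require the most care is precisely checking this last point --- that assembling the polynomially many already-computed, unshared diagrams into one tree stays within \aco --- since everything else is either quoted from the two preceding results or is the short evaluation above. (A balanced spine over $O(\log k)$ selectors would work just as well, but the linear chain is enough and is the most transparent to verify.)
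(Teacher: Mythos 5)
Your proof is correct, and it in fact supplies a step that the paper leaves implicit: the paper states this corollary with no proof at all, treating it as immediate from the \bdd slicing equivalence theorem together with \cref{lem_bsl}. Strictly speaking, those two results only reduce \mallmeq to \emph{polynomially many} instances of \bddequ (one per pair of dual atom occurrences), whereas \bddequ as defined in \cref{def_bddequ} asks about a single pair of \bdd, so some aggregation is genuinely needed. Your selector spine $\Phi=\itef{s_1}{\phi_1}{(\itef{s_2}{\phi_2}{(\cdots)})}$ handles this correctly: it is a legitimate \bdd under the paper's inductive definition (the fresh $s_i$ are variables, and general \bdd need be neither ordered nor read-once), and your evaluation argument that $\Phi\booleq\Psi$ holds \iff $\phi_i\booleq\psi_i$ for all $i$ is sound --- the reverse direction works because a witnessing valuation for $\phi_j\not\booleq\psi_j$ extends by setting $s_j=\one$ and $s_i=\zero$ for $i<j$. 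On the complexity side your construction is uniform and local, hence in \aco, though it slightly exceeds the paper's informal \enquote{bounded local replacement} rule of thumb: running \cref{lem_bsl} in parallel for every pair replaces each proof node by one node \emph{per pair}, i.e.\ polynomially many rather than boundedly many; this is still genuine (uniform) \aco, just not covered verbatim by the stated principle, and you rightly flag it as the delicate point. In short: same route as the paper (slicing-equivalence theorem plus \cref{lem_bsl}), augmented with an explicit and correct treatment of the many-pairs-to-one-instance issue that the paper glosses over.
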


In the next section we focus on the equivalence of \bdd and \obdd{}, proving first that the case of \obdd{}
can be reduced to proof
equivalence in \mallm.
%
Then, we will show the problem of equivalence of \BDD to be in \Logspace, and that of \obdd{} to be
\Logspace-hard,
thus characterizing the intrinsic complexity of proof equivalence in \mallm as \Logspace-complete.

Note that this contrasts with the classical result that equivalence of general Boolean formulas is \coNP-complete.
It turns out indeed that the classes of \BDD we consider enjoy a number of properties that allow to solve 
equivalence
more easily.


\section{Equivalence of \bdd}\label{sec_bdd}
	\subsection{Equivalence of \protectobdd reduces to proof equivalence in \mallm}\label{sec_reduce}

We now show that the converse of \cref{lem_bsl} holds for \obdd{}.



To do this, we rely on a formula $\boolf$ 
 which will serve as the placeholder of a
\obdd{V} $\phi$ we want to encode; and a context $\Gamma$ which contains one $\tlabwith x$ connective for each 
variable $x$ in $V$, organized in a way that allows for an inductive specification of \obdd{}.

Given an \obdd{} $\phi$, we wish to obtain a proof $\pdnf \phi$ of $\boolf,\Gamma$ such that
dual pairs with one element in $\boolf$ will receive either the value $\phi$ or a value equivalent to $\neg\phi$
in the \bdd slicing of $\pdnf \phi$;
and on the other hand, the other dual pairs of $\Gamma$ will receive equivalent values whatever the \obdd{}
we encode is. This will lead to the fact that two such encoding proofs are equivalent if and only if
the \obdd{} they encode are equivalent.


\begin{notation}
	We fix atomic formulas $\extlist{\alpha_1}{\alpha_n}$\dots{} and $\beta$ and write 
	$\boolf\eqdef \beta\plus\beta$. In what follows, we will use $\lbeta$ and $\rbeta$ to refer
	respectively to the left and right copies of $\beta$ in $\boolf$; 
	and likewise $\lalpha_i$ and $\ralpha_i$ for copies of $\alpha_i$ in $\alpha_i\with\alpha_i$.
	
	We write respectively $\pleft$ and $\pright$ the proofs
	\begin{center}
		\begin{prooftree}
			\Hypo{\beta,\nbeta}
			\Infer{1}[$\lplus$]{\boolf,\nbeta}
		\end{prooftree}
		\qquad
		\begin{prooftree}
			\Hypo{\beta,\nbeta}
			\Infer{1}[$\rplus$]{\boolf,\nbeta}
		\end{prooftree}
	\end{center}
	and for any $n$, we write $\pwith n$ the proof \ 
	\begin{prooftree*}
		\Hypo{\nalpha_n,\alpha_n}
		\Hypo{\nalpha_n,\alpha_n}
		\Infer{2}[$\rwith$]{\nalpha_n,\alpha_n\with\alpha_n}
	\end{prooftree*}
%
\end{notation}

\begin{definition}[encoding a \protectobdd]
	Let $\phi$ be an \obdd{} over the variables $V=\extset{x_1}{x_n}$. We define the sequent
	$$
	\seqv n\eqdef\ 
	\big(\cdots(\nbeta\otimes \nalpha_1)\otimes \nalpha_2)\otimes\cdots\big)
		\otimes\nalpha_n \:,\:
	\alpha_1\labwith {x_1}\alpha_1 \:,\:
	\alpha_2\labwith {x_2}\alpha_2 \:,\:
	\dots \:,\:
	\alpha_{n-1}\labwith {x_{n-1}}\alpha_{n-1}
	$$
	with $\seqv 0=\nbeta$
	and set $\seqvd n\eqdef \seqv n,\alpha_n\labwith{x_n}\alpha_n$
	with also $\seqvd 0=\nbeta$.

\newcommand{\LEllipsis}[3]{
	\Infer[rulemargin=0pt,rulestyle=none]{1}{\hspace{#1} \ddots}
	\Hypo{}
	\Infer[hsep=#2,rulestyle=none]{2}{#3}
}

	We define the proof $\pdnf \phi$ of conclusion $\boolf,\seqvd {n}$ by induction on $n$
	
	\begin{itemize}
		\item The base cases are $\zero$ and $\one$, encoded respectively as $\pleft$ and $\pright$.
		\item Otherwise, if $\phi=\itef {x_n}\psi\zeta$, with both $\psi$ and $\zeta$ 
		being \obdd{\extset{x_1}{x_{n-1}}},
		we have 
		$\pdnf \psi$ and $\pdnf \zeta$ defined by induction, and then
\begin{center}
$\pdnf \phi=\ $
			\begin{prooftree}
				\Theproof{\pdnf \psi}{\boolf,\seqvd {n-1}}
				\Hypo{\nalpha_n,\alpha_n}
				\Infer{2}[$\otimes$]{\boolf,\seqv {n},\alpha_n}
				\Theproof{\pdnf \zeta}{\boolf,\seqvd {n-1}}
				\Hypo{\nalpha_n,\alpha_n}
				\Infer{2}[$\otimes$]{\boolf,\seqv {n},\alpha_n}
				\Infer{2}[$\rlabwith n$]{\boolf,\seqvd n}
			\end{prooftree}
\end{center}
		\item The last case is $\phi=\dcare {x_n}\psi$, with $\psi$ being a \obdd{\extset{x_1}{x_{n-1}}}
		so that we have 
		$\pdnf \psi$ defined by induction, and then
\begin{center}
$\pdnf \phi=\ $
			\begin{prooftree}
				\Theproof{\pdnf \psi}{\boolf,\seqvd {n-1}}
				\Theproof{\pwith n}{\nalpha_n,\alpha_n\labwith n\alpha_n}
				\Infer{2}[$\otimes$]{\boolf,\seqvd {n}}
			\end{prooftree}
\end{center}
\end{itemize}
\end{definition}

We still need to state in what sense $\pdnf \phi$ is an encoding of $\phi$: we turn the 
statement about the value of atoms of $\boolf$ we made in the beginning of this section into a precise
property.

\begin{lemma}[associated \bdd slicing]
	Writing $\slfont B$ the \bdd slicing of $\pdnf \phi$, we have
		$$\slfont B[\lbeta,\nbeta]=\phi
		\qquad\quad
		\slfont B[\rbeta,\nbeta]\booleq{\neg{\phi}}
		\qquad\quad
		\slfont B[\nalpha_i,\lalpha_i]\booleq{x_i}
		\qquad\quad
		\slfont B[\nalpha_i,\ralpha_i]\booleq{\neg x_i}$$
\end{lemma}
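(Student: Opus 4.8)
The plan is to prove the four (in)equalities simultaneously by induction on $n$, following the recursive structure of the encoding $\pdnf\phi$ and unfolding the \bdd slicing $\slfont B$ of $\pdnf\phi$ according to \cref{def_boolslicing}. The only dual pairs occurring in $\boolf,\seqvd n$ are $[\lbeta,\nbeta]$, $[\rbeta,\nbeta]$ and, for each $i\leq n$, $[\nalpha_i,\lalpha_i]$ and $[\nalpha_i,\ralpha_i]$, so the four clauses together exhaust all pairs and the statement is closed under the induction. The recurring task is a bookkeeping one: for each $\otimes$ and each $\labwith x$ rule one must determine, pair by pair, on which side of the rule the two dual atoms sit, since this is exactly what selects the relevant clause of \cref{def_boolslicing}.

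For the base case $n=0$, the proofs $\pdnf\zero$ and $\pdnf\one$ are the two-rule proofs $\pleft$ and $\pright$; there are no $\alpha_i$, and a direct computation of the slicing of the $\lplus$ (resp.\ $\rplus$) rule applied to the axiom $\beta,\nbeta$ gives $\slfont B[\lbeta,\nbeta]$ and $\slfont B[\rbeta,\nbeta]$ as constants in $\{\zero,\one\}$ matching $\phi$ and a \bdd equivalent to $\neg\phi$ respectively, once the left/right copy convention is lined up with the $\plus$-rule used.

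For the step $\phi=\itef{x_n}\psi\zeta$ the encoding ends with a $\labwith{x_n}$ rule combining two $\otimes$-rules built from $\pdnf\psi$, $\pdnf\zeta$ and two fresh axioms $\nalpha_n,\alpha_n$. I would split the pairs into three groups. The $\beta$-pairs $[\lbeta,\nbeta]$ and $[\rbeta,\nbeta]$ keep both their atoms inside the $\pdnf\psi$, $\pdnf\zeta$ part of each $\otimes$, so the $\otimes$ clause preserves the inductive values and the final $\labwith{x_n}$ falls in its last (\enquote{otherwise}) case, producing $\itef{x_n}\psi\zeta=\phi$ and, using the induction hypothesis and the identity $\neg\itef{x_n}\psi\zeta\booleq\itef{x_n}{\neg\psi}{\neg\zeta}$, a \bdd equivalent to $\neg\phi$. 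The fresh pairs $[\nalpha_n,\lalpha_n]$ and $[\nalpha_n,\ralpha_n]$ come from the two new axioms, which $\otimes$ transmits unchanged and $\labwith{x_n}$ wraps into $\itef{x_n}\one\zero\booleq x_n$ and $\itef{x_n}\zero\one\booleq\neg x_n$. The remaining pairs $[\nalpha_i,\lalpha_i]$, $[\nalpha_i,\ralpha_i]$ with $i<n$ are again in the \enquote{otherwise} case of $\labwith{x_n}$, hence become $\itef{x_n}\chi\chi$ with $\chi$ the common inductive value on both branches, and $\itef{x_n}\chi\chi\booleq\chi$ closes them.

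The step $\phi=\dcare{x_n}\psi$ is analogous, except that the encoding ends with a single $\otimes$ pairing $\pdnf\psi$ with $\pwith n$. The $\beta$-pairs and the pairs $[\nalpha_i,\lalpha_i]$, $[\nalpha_i,\ralpha_i]$ with $i<n$ lie entirely on the $\pdnf\psi$ side, so $\otimes$ returns their inductive values directly, whereas $[\nalpha_n,\lalpha_n]$ and $[\nalpha_n,\ralpha_n]$ live in $\pwith n$ and are computed exactly as before. I expect the only genuinely delicate point to be the first clause in this case: since \cref{def_boolslicing} never creates a \texttt{DontCare} node, $\slfont B[\lbeta,\nbeta]$ comes out literally as $\psi$ rather than as $\dcare{x_n}\psi$, so the claimed identity holds only up to the equivalence $\dcare{x_n}\psi\booleq\psi$. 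This is harmless for the intended application---proof equivalence compares slicings only up to $\booleq$, by \cref{th_equiv} and the \bdd slicing equivalence theorem---but it is precisely the spot where the distinction between syntactic equality and equivalence must be handled with care, and is the main obstacle to reading the $=$ in the first clause literally.
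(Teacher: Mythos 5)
Your proof is correct and takes essentially the same route as the paper's, which runs the very same induction but writes out only the $\phi=\itef{x_n}\psi\zeta$ case (invoking, as you do, the negation lemma \cref{lem_neg} for the $[\rbeta,\nbeta]$ clause) and declares everything else routine. Your closing observation is accurate and is in fact a refinement that the paper's proof glosses over: since no clause of \cref{def_boolslicing} ever produces a \texttt{DontCare} node, for $\phi=\dcare{x_n}\psi$ one gets $\slfont B[\lbeta,\nbeta]=\psi$ literally, so the syntactic equality claimed in the first clause survives the induction only as an equivalence $\slfont B[\lbeta,\nbeta]\booleq\phi$ (equivalently, as equality up to erasure of \texttt{DontCare} nodes), which, as you note, is all that the ensuing corollary on $\pdnf\phi\malleq\pdnf\psi$ requires. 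Your hedge in the base case is also well-founded: with the encoding exactly as printed ($\pdnf\zero=\pleft$), the $\lplus$ clause of \cref{def_boolslicing} yields $\slfont B[\lbeta,\nbeta]=\one$, so the two constants' encodings must indeed be \enquote{lined up} (swapped) for the stated clauses to come out right.
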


\begin{proof}
%
\newcommand{\bstop}{\bslicing {\texttt{s}}}
	By a routine inspection of induction cases. Let us only review $\phi=\itef {x_n}\psi\zeta$.
	Let us write $\bslicing \phi$, $\bslicing \psi$ and $\bslicing \zeta$ the \bdd slicings respectively
	associated to the proofs $\pdnf \phi$, $\pdnf \psi$ and $\pdnf \zeta$.
	
	By induction we have that $\bslicing\psi[\lbeta,\nbeta]= \psi$ and 
	$\bslicing\zeta[\lbeta,\nbeta]= \zeta$, therefore by definition of the \bdd slicing associated
	to a $\twith$ rule, we have that 
	$\bslicing\phi[\lbeta,\nbeta]= \itef{x_n}\psi\zeta=\phi$. The case of $\rbeta$ is similar, but for its use
	of \cref{lem_neg} from the next section.

	As for the other pairs of occurences of dual atoms in the conclusion, let us have a look at the case
	of $[\nalpha_i,\lalpha_i]$:
	if by induction $\bslicing\psi[\nalpha_i,\lalpha_i]\booleq x_i$ and 
	$\bslicing\zeta[\nalpha_i,\lalpha_i]\booleq x_i$,
	then by definition $\bslicing\phi[\nalpha_i,\lalpha_i]\booleq \itef{x_n}{x_i}{x_i}\booleq x_i$.
	The case of $\ralpha_i$ is similar.
\end{proof}

\begin{corollary}[equivalence]
	If $\phi$ and $\psi$ are two \obdd{\extset{x_1}{x_n}}, then
	$\pdnf  \phi \malleq \pdnf  \psi$ if and only if $\phi \booleq \psi$.
\end{corollary}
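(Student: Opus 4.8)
The plan is to route everything through the \bdd slicing equivalence theorem, which says that $\pdnf \phi\malleq\pdnf \psi$ holds exactly when the two \bdd slicings $\mslicing{\pdnf \phi}$ and $\mslicing{\pdnf \psi}$ are equivalent, that is, agree up to $\booleq$ on every pair $[\gamma,\ngamma]$ of dual atom occurrences. Since both $\pdnf \phi$ and $\pdnf \psi$ encode \obdd{} over $\extset{x_1}{x_n}$, they share the conclusion $\boolf,\seqvd n$, so the two slicings are indeed comparable. The first step is therefore to enumerate the dual pairs of $\boolf,\seqvd n$. Reading off the conclusion, the $\beta$-occurrences are the two copies $\lbeta,\rbeta$ coming from $\boolf=\beta\plus\beta$ together with the single $\nbeta$ buried in the left-nested tensor, while the $\alpha$-occurrences are the $\nalpha_i$ in the tensor together with the two copies $\lalpha_i,\ralpha_i$ from each $\alpha_i\labwith{x_i}\alpha_i$. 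Hence the dual pairs are exactly $[\lbeta,\nbeta]$, $[\rbeta,\nbeta]$, and, for each $i$, the pairs $[\nalpha_i,\lalpha_i]$ and $[\nalpha_i,\ralpha_i]$ -- precisely the pairs computed in the associated \bdd slicing lemma.

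Second, I would feed these pairs into that lemma for both $\pdnf \phi$ and $\pdnf \psi$. On the $\alpha$-pairs the lemma yields $\mslicing{\pdnf \phi}[\nalpha_i,\lalpha_i]\booleq x_i\booleq\mslicing{\pdnf \psi}[\nalpha_i,\lalpha_i]$ and $\mslicing{\pdnf \phi}[\nalpha_i,\ralpha_i]\booleq\neg x_i\booleq\mslicing{\pdnf \psi}[\nalpha_i,\ralpha_i]$, so these pairs always agree, irrespective of which \obdd{} we encoded. All the content is thus concentrated in the two $\beta$-pairs, where the lemma gives $\mslicing{\pdnf \phi}[\lbeta,\nbeta]=\phi$ against $\mslicing{\pdnf \psi}[\lbeta,\nbeta]=\psi$, and $\mslicing{\pdnf \phi}[\rbeta,\nbeta]\booleq\neg\phi$ against $\mslicing{\pdnf \psi}[\rbeta,\nbeta]\booleq\neg\psi$.

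Third, I would assemble the two directions. For the forward direction, if $\pdnf \phi\malleq\pdnf \psi$ then in particular the two slicings agree on $[\lbeta,\nbeta]$, which reads $\phi\booleq\psi$. For the converse, assuming $\phi\booleq\psi$, every pair agrees: the $\alpha$-pairs agree automatically as above, the $[\lbeta,\nbeta]$ pair agrees since $\phi\booleq\psi$, and the $[\rbeta,\nbeta]$ pair agrees since $\neg\phi\booleq\neg\psi$ follows from $\phi\booleq\psi$ (negation respects Boolean equivalence). As every dual pair then agrees up to $\booleq$, we obtain $\mslicing{\pdnf \phi}\sliceeq\mslicing{\pdnf \psi}$, hence $\pdnf \phi\malleq\pdnf \psi$ by the theorem.

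The genuinely essential but routine part is the bookkeeping of the first step: making sure the four families above exhaust the dual pairs of $\boolf,\seqvd n$ and that no two dual occurrences are overlooked, since an unaccounted pair could in principle spoil the equivalence. Everything past that is immediate; the only non-cosmetic remark is that both $\beta$-pairs encode the very same condition $\phi\booleq\psi$ -- directly on the left copy and through negation on the right copy -- so no extra hypothesis is required.
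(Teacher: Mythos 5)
Your proposal is correct and follows exactly the route the paper intends: the paper states this corollary without proof, as an immediate consequence of the associated \bdd slicing lemma combined with the \bdd slicing equivalence theorem, which is precisely the two-step argument you spell out. Your enumeration of the dual pairs (the two $\beta$-pairs and the $2n$ $\alpha$-pairs) simply makes explicit the bookkeeping that the paper leaves implicit.
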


\begin{lemma}[computing the encoding]
	Given a \obdd{\extset{x_1}{x_n}} $\phi$, $\pdnf \phi$ can be computed in \aco.
\end{lemma}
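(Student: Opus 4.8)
The plan is to exhibit $\pdnf\phi$ as the image of $\phi$ under a purely local tree transformation, in the same spirit as \cref{lem_bsl} but in the reverse direction. Since both \obdd{} and \mallm proofs are represented as labeled trees (\cref{rem_proofs}), it suffices to check that each node of the tree representing $\phi$ is replaced by a bounded block of proof rules whose shape and labels depend only on that node, and that the blocks glue together along the tree edges.

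First I would read off, from each inductive clause of the encoding, the block of rules it produces:
\begin{itemize}
\item a leaf labeled $\zero$ (resp.\ $\one$) is replaced by the fixed two-rule proof $\pleft$ (resp.\ $\pright$);
\item an If-node $\itef{x_i}{\cdot}{\cdot}$ is replaced by two $\otimes$ rules, each pairing one incoming subproof with an axiom on $\nalpha_i,\alpha_i$, feeding a single $\rlabwith i$ rule, its two subtrees becoming the two premises;
\item a DontCare node $\dcare{x_i}{\cdot}$ is replaced by a single $\otimes$ rule pairing the incoming subproof with the fixed proof $\pwith i$.
\end{itemize}
In every case the number of new rules introduced is bounded (at most five), and the one or two children of the \obdd{} node become exactly the premises of the introduced rules, so the edge structure of the tree is preserved locally.

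The crucial observation making this an \aco transformation is that each block is determined by the node's own label alone. The variable $x_i$ is carried explicitly by the If- and DontCare nodes, so the index $i$ --- needed to select the atoms $\nalpha_i,\alpha_i$, the auxiliary proof $\pwith i$, and the label of the $\rlabwith i$ rule --- is available without inspecting the rest of the tree, in particular without computing the node's depth. Thus the replacement depends only on the vertex under treatment and its immediate children, which by the principle recalled in the introduction places the transformation in \aco.

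The point that requires the most care, and which I expect to be the main obstacle, is the sequent labelling demanded by \cref{rem_proofs}: each introduced rule must be tagged with its conclusion, which is one of $\boolf,\seqvd i$ or $\boolf,\seqv i,\alpha_i$, a formula whose size grows with $i$. I would argue that this causes no real difficulty, since $\seqv i$ and $\seqvd i$ form a completely regular pattern --- a left-nested tensor $\big(\cdots(\nbeta\otimes\nalpha_1)\otimes\cdots\big)\otimes\nalpha_i$ followed by the sequence of $\labwith{x_j}$-formulas --- generated uniformly from the index $i$ and hence producible within \aco. With the labels handled, every block is a bounded local gadget and the whole construction is an \aco computation of $\pdnf\phi$.
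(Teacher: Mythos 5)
Your proof is correct and follows essentially the same route as the paper's: the same local replacement of $\zero/\one$ leaves by $\pleft/\pright$, of \texttt{IfThenElse} nodes by two $\otimes$ rules feeding a $\rlabwith{x_i}$ rule, and of \texttt{DontCare} nodes by an $\otimes$ rule against $\pwith{i}$, with the key observation that each block is determined by the node's own label. Your extra paragraph on producing the growing sequent annotations $\boolf,\seqvd{i}$ uniformly from the index $i$ addresses a point the paper's proof passes over silently, and is a welcome refinement rather than a deviation.
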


\begin{proof} As in the proof of \cref{lem_bsl}, we show that the inductive definition can in fact be
seen as a local graph transformation introducing nodes of bounded size:
\begin{itemize}
	\item Replace each $\zero$ node by 
	\begin{prooftree}\Theproof{\pleft}{\boolf,\nbeta}\Ellipsis{}{}\end{prooftree} 
	and $\one$ node by 
	\begin{prooftree}\Theproof{\pright}{\boolf,\nbeta}\Ellipsis{}{}\end{prooftree}.
	
	\smallskip
	\item Replace each $\dcare {x_i}{\cdot}$ by \ 
			\begin{prooftree}
				\Hypo{\vdots}
				\Theproof{\pwith {x_i}}{\nalpha_i,\alpha_i\labwith {x_i}\alpha_i}
				\Infer{2}[$\otimes$]{\boolf,\seqvd {i}}
				\Ellipsis{}{}
			\end{prooftree}
	
	\smallskip
	\item Replace each $\itef {x_i}{\cdot}{\cdot}$ by \ 
			\begin{prooftree}
				\Hypo{\vdots}
				\Hypo{\nalpha_i,\alpha_i}
				\Infer{2}[$\otimes$]{\boolf,\seqv {i},\alpha_i}
				\Hypo{\vdots}
				\Hypo{\nalpha_i,\alpha_i}
				\Infer{2}[$\otimes$]{\boolf,\seqv {i},\alpha_i}
				\Infer{2}[$\rlabwith {x_i}$]{\boolf,\seqvd {i}}
				\Ellipsis{}{}
			\end{prooftree}
\end{itemize}
For any of these replacements, we see that the choice of the case to apply and the label of the resulting 
block (of bounded size)
of rules replacing a node depends
only on the label of the node we are replacing, therefore the transformation is in \aco.
\end{proof}

\begin{corollary}[reduction] \obddequ reduces to \mallmeq in \aco.
\end{corollary}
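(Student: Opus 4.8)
The plan is to read this statement off directly from the two preceding results, so the corollary is essentially a bookkeeping step rather than a genuinely new argument. The reduction function I would use sends an instance $(\phi,\psi)$ of \obddequ, where $\phi$ and $\psi$ are both \obdd{} over the same ordered variable set $V=\extset{x_1}{x_n}$, to the pair of proofs $(\pdnf\phi,\pdnf\psi)$.

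First I would check that this really produces a legitimate instance of \mallmeq, \ie that the two proofs share a conclusion. This is immediate from the encoding definition: for any \obdd{} over $V$, the proof $\pdnf{(\cdot)}$ has conclusion $\boolf,\seqvd n$, a sequent that depends only on $V$ and not on the particular diagram. Hence $\pdnf\phi$ and $\pdnf\psi$ both conclude with $\boolf,\seqvd n$, as required by \cref{def_mallmeq}.

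Next I would argue that the reduction is computable in \aco. The lemma on computing the encoding already gives that each of $\pdnf\phi$ and $\pdnf\psi$ can be produced from its input diagram by an \aco transformation. Since the reduction simply applies this same local transformation to the two disjoint halves of its input in parallel, and a fixed number of parallel \aco computations is again \aco, the whole map $(\phi,\psi)\mapsto(\pdnf\phi,\pdnf\psi)$ stays in the class.

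Finally, correctness is exactly the content of the equivalence corollary: $\pdnf\phi\malleq\pdnf\psi$ holds if and only if $\phi\booleq\psi$. Thus the map is an \aco reduction from \obddequ to \mallmeq. There is no real obstacle here; all the substance was dispatched in the encoding construction and its two accompanying lemmas, and the only points needing a line of care---that the two encodings share a conclusion, and that running the encoding on both inputs remains within \aco---are both immediate.
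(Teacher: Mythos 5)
Your proposal is correct and follows essentially the same route as the paper, which states this corollary without further argument precisely because it is the direct combination of the equivalence corollary (correctness: $\pdnf\phi\malleq\pdnf\psi$ \iff $\phi\booleq\psi$) and the computing-the-encoding lemma (the map $\phi\mapsto\pdnf\phi$ is in \aco). Your two explicit bookkeeping checks---that both encodings share the conclusion $\boolf,\seqvd n$ so the output is a legitimate \mallmeq instance, and that applying the \aco transformation to both halves of the input in parallel stays in \aco---are exactly the details the paper leaves implicit.
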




To sum up, we have so far the following chain of \aco reductions:

\begin{center}
	\obddequ\quad\!\!\!$\rightarrow$\quad%
	\mallmeq\!\!\!\quad$\rightarrow$\quad \bddequ
\end{center}

	\subsection{\Logspace-completeness}\label{sec_logspace}
	We prove in this section that all these equivalence problems are \Logspace-complete. We begin by listing
a few useful properties of \bdd that will allow to design a \Logspace decision procedure for their equivalence.
Then, we prove the \Logspace-hardness by reducing to \obddequ a \Logspace-complete problem on line graph orderings.

The starting point is the good behavior of \bdd with respect to negation. In the following lemma, we consider
the negation of a \bdd which is not strictly speaking a \bdd itself:
we think of it as the equivalent Boolean formula, the point
being precisely to show that this Boolean formula can be easily expressed as a \bdd.

\begin{lemma}[negation]\label{lem_neg}
	If $\phi$ and $\psi$ are \bdd and $X$ is either $\zero$, $\one$ or a variable, we have
	$$\neg{\itef X\phi\psi}\,\booleq\,\itef X{\neg\phi}{\neg\psi}$$
	$$\neg{\dcare X\phi}\,\booleq\,\dcare X{\neg\phi}$$
\end{lemma}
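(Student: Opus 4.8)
The plan is to prove each equivalence directly from the semantic definition of \booleq given in \cref{def_booleq}, unfolding the abbreviations for \texttt{IfThenElse} and \texttt{DontCare} supplied in the definition of \bindd. Recall that the notions of valuation and equivalence are extended to \bdd so that $\itef X\phi\psi\,\booleq\, X\p\phi+\neg X\p\psi$ and $\dcare X\phi\,\booleq\,\phi$. So the whole statement reduces to a check about the evaluation function $v(\cdot)$ commuting with the Boolean operations involved.

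For the first equation, I would fix an arbitrary valuation $v$ and compute both sides. Using the semantic abbreviation, the left-hand side evaluates to $v(\neg{(X\p\phi+\neg X\p\psi)})$, which by the standard De Morgan behavior of $v(\cdot)$ on $\neg{\cdot}$, $+$ and $\p$ equals $v((\neg X+\neg\phi)\p(X+\neg\psi))$. The right-hand side evaluates to $v(X\p\neg\phi+\neg X\p\neg\psi)$. The cleanest way to finish is to case-split on whether $v(X)=\one$ or $v(X)=\zero$: in each case exactly one of the two summands on either side survives, and one reads off $v(\neg\phi)$ respectively $v(\neg\psi)$ on both sides. Since $v$ was arbitrary, the two formulas agree on every valuation, which is exactly $\booleq$. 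The case $X=\zero$ or $X=\one$ is subsumed, since $v$ evaluates these constants to themselves and the same case analysis applies (one branch is simply forced).

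For the second equation, the argument is shorter: by the abbreviation $\dcare X\phi\,\booleq\,\phi$, the left-hand side is $v(\neg\phi)$ and the right-hand side is again $v(\neg\phi)$ for every $v$, so they are trivially equivalent, the \texttt{DontCare} wrapper being transparent to evaluation on both sides.

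I do not expect any genuine obstacle here: the statement is a routine verification of commutation between negation and the $\texttt{IfThenElse}$/$\texttt{DontCare}$ constructors, and the only thing to be careful about is to argue purely at the level of \emph{equivalence} (equality of the induced Boolean functions) rather than syntactic \emph{equality}, since for instance $\itef X{\neg\phi}{\neg\psi}$ is literally a different formula from $\neg{\itef X\phi\psi}$. The per-valuation case analysis on $v(X)$ is what makes this distinction harmless, and it is the single step carrying all the content.
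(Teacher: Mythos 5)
Your proof is correct, but it takes a different route from the paper. The paper argues entirely at the level of Boolean-formula equivalences: it expands $\neg{\itef X\phi\psi}$ by De Morgan into $(\neg X+\neg\phi)\p(X+\neg\psi)$, multiplies out, and then invokes the \emph{consensus rule} ($X\p a+\neg X\p b+a\p b\booleq X\p a+\neg X\p b$) to discard the cross term and land exactly on $X\p\neg\phi+\neg X\p\neg\psi\booleq\itef X{\neg\phi}{\neg\psi}$; the \texttt{DontCare} case is dismissed as obvious. You instead go straight to the semantics of \cref{def_booleq}: fix an arbitrary valuation $v$, case-split on $v(X)\in\{\zero,\one\}$, and check that both sides evaluate to $v(\neg\phi)$ (resp.\ $v(\neg\psi)$), which avoids the consensus rule altogether. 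Your argument is the more elementary one --- nothing beyond the definition of equivalence is needed, and the constant cases $X=\zero,\one$ are visibly subsumed --- while the paper's stays purely equational, which is compact and keeps the reasoning inside the algebra of formulas rather than quantifying over valuations. Both correctly handle the one genuine subtlety, flagged in the paper just before the lemma and in your closing remark: $\neg{\itef X\phi\psi}$ is not itself a \bdd, so the claim only makes sense as an equivalence of the induced Boolean functions, not as a syntactic identity.
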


\begin{proof}
	First we can transform our expression by
	$$\neg{\itef X\phi\psi}\,\booleq\, \neg{X\p\phi+\neg X\p\psi}\,\booleq\,(\neg X+\neg\phi)\p(X+\neg\psi)
	\,\booleq\, X\p\neg\psi+\neg X\p\neg\phi+\psi\p\phi$$
	then we can apply the so-called \enquote{consensus rule} of Boolean formulas
	$$X\p\neg\psi+\neg X\p\neg\phi+\psi\p\phi\,\booleq\,
	 X\p\neg\phi+\neg X\p\neg\psi\,\booleq\, \itef X{\neg \phi}{\neg\psi}$$
	The case of \texttt{DontCare} is obvious.
\end{proof}

\begin{corollary}
	If $\phi$ is a \ite, then there is a \bdd $\ineg \phi$ such that
	$\neg\phi\booleq\ineg\phi$.
	
	Moreover $\ineg \phi$ can be computed in logarithmic space.
\end{corollary}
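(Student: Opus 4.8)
The plan is to build $\ineg\phi$ by structural induction on the \bdd $\phi$, using \cref{lem_neg} at each node to commute the negation past the constructor so that what we produce is again a genuine \bdd rather than merely the Boolean formula $\neg\phi$. The point to keep in mind throughout is that $\neg\phi$ is \emph{a priori} just a Boolean formula carrying a negation symbol, whereas the syntax of \bdd has no negation constructor; \cref{lem_neg} is exactly what allows the recursion to stay inside the class of \bdd. Concretely I would set $\ineg\zero=\one$ and $\ineg\one=\zero$ on the constant leaves, $\ineg{\itef X\phi\psi}=\itef X{\ineg\phi}{\ineg\psi}$ on \texttt{IfThenElse} nodes, and $\ineg{\dcare X\phi}=\dcare X{\ineg\phi}$ on \texttt{DontCare} nodes. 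Each clause reinstates a \bdd formation rule applied to \bdd furnished by the induction hypothesis, the internal nodes keeping their guard $X$ (a variable, $\zero$ or $\one$), so $\ineg\phi$ is a \bdd by construction.

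I would then check $\neg\phi\booleq\ineg\phi$ by the same induction. The two base cases are immediate from evaluating the constants. For an \texttt{IfThenElse} node, \cref{lem_neg} gives $\neg{\itef X\phi\psi}\booleq\itef X{\neg\phi}{\neg\psi}$; plugging in the induction hypotheses $\neg\phi\booleq\ineg\phi$ and $\neg\psi\booleq\ineg\psi$, together with the fact that equivalence is a congruence for the \texttt{IfThenElse} constructor — immediate from the semantic definition of \cref{def_booleq} — yields $\neg{\itef X\phi\psi}\booleq\itef X{\ineg\phi}{\ineg\psi}=\ineg{\itef X\phi\psi}$. The \texttt{DontCare} node is handled identically with the second equivalence of \cref{lem_neg}.

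For the complexity claim I would argue on the labeled-tree representation of \bdd: the map $\phi\mapsto\ineg\phi$ relabels each constant leaf ($\zero\leftrightarrow\one$) and leaves every internal node, together with its guard, untouched, so the image of a node depends only on that node. This is precisely a bounded, node-local graph transformation, hence in \aco by the principle recalled in the introduction, and \emph{a fortiori} computable in logarithmic space as claimed.

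The only substantive step is the first one, namely guaranteeing that the recursion never escapes the \bdd syntax; all the work there is already carried by \cref{lem_neg}, which lets negation be pushed past each constructor while re-exposing a constructor of the same shape at the top. Once that is granted, the equivalence is a one-line induction and the complexity bound a single node-local inspection, so I expect no further obstacle.
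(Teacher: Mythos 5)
Your proposal is correct and matches the paper's proof exactly: the paper likewise observes that an induction on \cref{lem_neg} reduces negation to flipping the $\zero$ and $\one$ leaves while leaving internal nodes untouched, and notes that this node-local transformation is even in \aco, hence in logarithmic space. You merely spell out the induction and the congruence step in more detail than the paper does.
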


\begin{proof}
	An induction on the previous lemma shows that we can obtain the negation of a \bdd simply by flipping
	the $\zero$ nodes to $\one$ nodes and conversely. Hence the transformation is even in \aco.
\end{proof}

Then, we show that a \bdd can be seen as a sum of monomials through a \Logspace transformation.

\begin{lemma}[\ite as sums of monomials]
	If $\phi$ is a \bdd, then there is a formula $\idnf\phi$ which is a sum of monomials
	and is such that $\phi\booleq\idnf\phi$.
	
	Moreover $\idnf \phi$ can be computed in logarithmic space.
\end{lemma}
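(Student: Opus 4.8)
The plan is to produce $\idnf\phi$ explicitly as the sum, ranging over the leaves of the tree labelled $\one$, of the monomial that records the literals encountered on the branch leading to that leaf. Existence I would get by a direct structural induction from the two defining equivalences $\dcare X\phi\booleq\phi$ and $\itef X\phi\psi\booleq X\p\phi+\neg X\p\psi$: assuming $\idnf\psi$ and $\idnf\zeta$ are sums of monomials equivalent to the immediate subdiagrams, the formula $X\p\idnf\psi+\neg X\p\idnf\zeta$ turns into a sum of monomials once products are distributed over sums (each monomial of $\idnf\psi$ receives the factor $X$, each monomial of $\idnf\zeta$ the factor $\neg X$), while a \texttt{DontCare} node simply reuses $\idnf\phi$; the constants $\zero$ and $\one$ are the degenerate empty sum and empty product. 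This already gives $\phi\booleq\idnf\phi$.

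Reading this induction off the tree, every monomial of $\idnf\phi$ corresponds to exactly one $\one$-labelled leaf and is the product of the literals collected while descending from the root to that leaf: the then-branch (\resp the else-branch) of an \texttt{IfThenElse} node on a variable $x$ contributes $x$ (\resp $\neg x$), and \texttt{DontCare} nodes contribute nothing. The one point needing care is the constant tests $X\in\{\zero,\one\}$, which are not literals: a branch entering the dead side of such a node (the then-side of $\itef\zero{\cdot}{\cdot}$ or the else-side of $\itef\one{\cdot}{\cdot}$) would acquire a $\zero$ factor, so its whole monomial is simply dropped from the sum, matching $\itef\zero\phi\psi\booleq\psi$ and $\itef\one\phi\psi\booleq\phi$, whereas the live side contributes a $\one$ factor that is omitted. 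Note that since the diagrams at hand need not be read-once, a path may collect a variable together with its negation; this causes no problem, as a product of literals is a monomial by definition regardless of such conflicts, and no consistency check is therefore required.

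For the complexity bound I would describe a transducer over the read-only tree encoding. The number of $\one$-leaves is at most the size of the tree and each path-monomial has length at most the depth, so $\idnf\phi$ is of polynomial size, consistent with a logarithmic-space computation. The transducer iterates over the nodes in the order given by the encoding; whenever it meets a $\one$-leaf it emits the associated monomial (separating consecutive monomials by $+$), which it prints by walking once from the root down to that fixed leaf and outputting the literal dictated by each variable test met on the way, aborting silently if the branch passes through the dead side of a constant test. Only a constant number of pointers into the input, together with the index of the current leaf, are stored, all of logarithmic width.

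The step I expect to be the genuine obstacle is justifying this last walk in \Logspace: for a fixed target leaf one must decide, for each node above it, whether the leaf lies in its left or right subtree, \ie reconstruct the root-to-leaf branch together with its left/right choices, from the flat encoding of the tree. This is a combinatorial property of the bracketing and addressing of subexpressions and is decidable in logarithmic space by comparing positions, but it is where the actual work sits; everything else is bookkeeping around the two defining equivalences.
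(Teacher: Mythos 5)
Your proposal follows essentially the same route as the paper: one monomial per $\one$-leaf, obtained by collecting the literals along the path between the root and that leaf, summed over all $\one$-leaves, with the \Logspace bound coming from the fact that the walk only requires remembering which leaf is being treated and a current position in the tree. You are in fact somewhat more careful than the paper's own (very terse) proof, which leaves implicit both the polarity of the output literals and the treatment of constant tests $\itef\zero{\cdot}{\cdot}$ and $\itef\one{\cdot}{\cdot}$, whose dead branches you correctly discard.
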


\begin{proof}
	For each $\one$ node in $\phi$, go down to the root of $\phi$ and output one by one the variables of
	any $\itef x{\cdot}{\cdot}$ encountered: this produces the monomial associated to this $\one$ node. Then
	$\idnf \phi$ is the sum of all the monomials obtained this way and is clearly equivalent to $\phi$.
	The procedure is in \Logspace because we only need to remember which $\one$-leave we are treating and where we are
	in the tree (when going down) at any point.
\end{proof}

Putting all this together, we finally obtain a space-efficient decision procedure. Note however that it is totally
sub-optimal in terms of time: to keep with the logarithmic space bound, we have to recompute a lot of things
rather than store them.

\begin{corollary}[\bddequ is in \Logspace]
	There is a logarithmic space algorithm that, given two \bdd $\phi$ and $\psi$, decides
	wether they are equivalent.
\end{corollary}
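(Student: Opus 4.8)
The plan is to reduce the equivalence test to a family of purely combinatorial conflict checks between monomials, each of which can then be carried out by recomputation inside logarithmic space. First I would note that $\phi\booleq\psi$ \emph{fails} exactly when some valuation separates the two formulas, which happens iff at least one of $\phi\p\ineg\psi$ and $\ineg\phi\p\psi$ is satisfiable; here $\ineg\phi$ and $\ineg\psi$ are the \bdd computing the negations, available in \aco by the corollary to \cref{lem_neg}. Thus the problem splits into two unsatisfiability tests of the same shape, and it suffices to decide, given two \bdd $\chi$ and $\chi'$, whether $\chi\p\chi'\booleq\zero$.

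For this I would invoke the decomposition of a \bdd into a sum of monomials: $\chi\booleq\idnf\chi=\sum_i m_i$ and $\chi'\booleq\idnf{\chi'}=\sum_j m'_j$, so that $\chi\p\chi'\booleq\sum_{i,j}(m_i\p m'_j)$. A sum of monomials is equivalent to $\zero$ iff each of its monomials is, and a product $m_i\p m'_j$ of two monomials is again a product of literals, which is equivalent to $\zero$ iff its literals contain some variable together with its negation — the monomial conflict condition of \cref{rem_compat}. Hence $\chi\p\chi'\booleq\zero$ iff every pair $(m_i,m'_j)$ is in conflict, and the equivalence of $\phi$ and $\psi$ becomes a conjunction of such pairwise conflict tests taken over the two unsatisfiability instances.

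It remains to perform this in logarithmic space, which is the only delicate point and the place where I expect the real work to sit. The sum-of-monomials form may have polynomially many monomials, each of polynomial length, so writing the distributed product $\sum_{i,j}m_i\p m'_j$ down explicitly is out of the question. Instead I would never store a monomial at all: the monomials of $\idnf\chi$ are indexed by the $\one$-leaves of $\chi$, each identified by a logarithmic-size address in the tree, and the associated sequence of literals is recovered on demand by walking from that leaf to the root and reading off the variable and branch direction of every \texttt{IfThenElse} node met, exactly as in the proof of the sum-of-monomials lemma. (When $\chi=\ineg\psi$ one need not even materialise the negated tree, since its $\one$-leaves are the $\zero$-leaves of $\psi$ and are read along identical paths.) The outer algorithm then enumerates all pairs of leaf-addresses, one from each \bdd, and for each pair tests whether the combined literal set produced by the two path-walks contains a complementary pair $x,\neg x$; this inner test is a doubly nested scan using only a constant number of logarithmic pointers, and it subsumes the case where one monomial is already internally contradictory.

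The main obstacle, then, is not correctness but keeping everything within the space bound, and it is overcome precisely as the remark preceding the statement warns: by trading time for space and recomputing each monomial from its leaf address rather than storing it. If some pair survives with no conflict in either of the two instances the formulas differ, and otherwise they are equivalent; since the enumeration and each conflict test run in logarithmic space, so does the whole procedure, establishing that \bddequ is in \Logspace.
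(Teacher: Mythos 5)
Your proof is correct and follows essentially the same route as the paper: reduce $\phi\booleq\psi$ to the two unsatisfiability checks $\ineg\psi\p\phi\booleq\zero$ and $\ineg\phi\p\psi\booleq\zero$ via the negation corollary, rewrite each side as a sum of monomials, and decide the result by pairwise conflict tests using \cref{rem_compat}. The only difference is that you spell out the space-efficient implementation (recomputing monomials from leaf addresses rather than storing them), which the paper instead delegates to its two lemmas stating that $\ineg{(\cdot)}$ and $\idnf{(\cdot)}$ are \Logspace-computable, together with the standard composition of logarithmic-space computations.
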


\begin{proof}
	The \bdd $\phi$ and $\psi$ are equivalent if and only if 
	$\phi\Leftrightarrow\psi=(\neg\phi+\psi)\p(\neg\psi+\phi)\booleq 1$ that is to say (by passing to the
	negation)
	$(\neg\psi\p\phi)+(\neg\phi\p\psi)\booleq \zero$, which holds if and only if
	both $\neg\psi\p\phi\booleq\zero$ and $\neg\phi\p\psi\booleq \zero$.

	But then, considering the first one (the other being similar) we can rewrite it in logarithmic space
	using the two above lemmas as $\idnf{(\ineg\psi)}\p\idnf{\phi}\booleq\zero$.
	This holds if and only if for all pairs $(\monof m,\monof m')$ of one monomial in 
	$\idnf{(\ineg\psi)}$ and one monomial
	in $\idnf{\phi}$, $\monof m$ and $\monof m'$ are in conflict; which can be checked in logarithmic space
	using \cref{rem_compat}.
\end{proof}

Let us now introduce an extremely simple, yet \Logspace-complete problem~\cite{Etessami1997}, which will ease the
\Logspace-hardness part of our proof.

\newcommand{\ord}{\problem{ORD}\xspace}
\begin{definition}[order between vertices]
\emph{Order between vertices} (\ord) is the following decision problem:
	\begin{center}
	{\it\enquote{%
	Given a directed graph $G=(V,E)$ that is a line%
	\footnote{We use the standard definition of graph as a pair $(V,E)$ of sets of vertices and edges (oriented
	couples of vertices $x\rightarrow y$).
	A graph is a \emph{line} if it is connected and all the vertices have in-degree and out-degree $1$, except the
	\emph{begin} vertex which has in-degree $0$ and out-degree $1$ and the \emph{exit} vertex
	which has in-degree $1$ and out-degree $0$. A line induces a total order on vertices through its transitive
	closure}
	and two vertices $f,s\in V$\\
	do we have 
	$f<s$ in the total order induced by $G$?}}
	\end{center}
\end{definition}

\begin{lemma}
	\ord reduces to \obddequ in \aco.
\end{lemma}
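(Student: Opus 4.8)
The plan is to turn a line graph $G$ together with its two marked vertices $f,s$ directly into a pair of ordered binary decision diagrams that are equivalent exactly when $f<s$. I take the variables of the \obdd{} to be the vertices of $G$, ordered by the line itself: the exit vertex plays the role of the first (root) variable and the begin vertex that of the last, so the diagram is read from the exit down to the begin, each vertex contributing exactly one level. Crucially this ordering is used only \emph{implicitly}: the tree will be laid out by following the unique incoming edge of each vertex, so no global computation of the order --- which is precisely what \ord asks for --- is ever performed.

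To build the first diagram $\phi$ I would replace every vertex by one node of a fixed local gadget. An ordinary vertex $v$ becomes a pass-through $\dcare{x_v}{\cdot}$; a \emph{special} vertex $v$ (one of $f$, $s$) becomes a test $\itef{x_v}{\cdot}{\cdot}$ whose \texttt{Then} branch continues carrying a pending bit $\one$ and whose \texttt{Else} branch continues carrying a pending bit $\zero$; and each leaf outputs its pending bit. The key observation is that a deeper test overwrites the bit set by a shallower one, so along every root-to-leaf path the leaf returns the value chosen at the \emph{deepest} special vertex. Since the deepest (equivalently, earliest in the line) of $f,s$ is the same vertex $w$ on every path, the whole diagram satisfies $\phi\booleq x_w$, where $w$ is whichever of $f,s$ comes first in $G$. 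I then let $\psi$ be produced by the very same procedure but marking \emph{only} $f$ as special, so that $\psi\booleq x_f$. As $f\neq s$, the projections $x_f$ and $x_s$ are distinct Boolean functions, whence $\phi\booleq\psi$ iff $w=f$ iff $f<s$, which is exactly the equivalence we want. Both $\phi$ and $\psi$ visibly respect the chosen variable order and consume every variable, so they are genuine \obdd{} over the same ordered variable set.

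It remains to check that the transformation is in \aco, and this is the delicate point I expect to be the main obstacle. A test node $\itef{x_v}{\cdot}{\cdot}$ duplicates everything below it, and since there are two special vertices the begin-side of the tree is copied at most $2\times 2=4$ times; thus each vertex of $G$ is replaced by at most four nodes, a bounded number. The only non-local ingredient is the pending bit, but it ranges over a two-element set and is propagated by a constant-size controller running along the line (set to $\one$ or $\zero$ on passing a special vertex, copied unchanged on passing an ordinary one, and read off at the leaves). In other words the output is the product of $G$ with a fixed finite gadget, exactly the kind of bounded local replacement covered by the \aco principle recalled in the introduction. I therefore expect the real work of the proof to be the inductive verification that \enquote{the deepest special vertex wins} --- i.e.\ the computation of the Boolean function denoted by $\phi$ --- together with the bookkeeping showing that the duplication stays bounded and the controller stays finite-state, so that \ord indeed reduces to \obddequ in \aco.
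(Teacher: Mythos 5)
Your reduction is correct at the level of Boolean functions: with $\phi$ branching at both $f$ and $s$ and $\psi$ branching only at $f$, one indeed gets $\phi\booleq x_w$ where $w$ is the earlier of $f,s$, and $\psi\booleq x_f$, so $\phi\booleq\psi$ iff $f<s$. The genuine gap is the \aco claim. The object you describe is a \emph{tree}, and in that tree the replacement of a vertex $v$ is \emph{not} determined by $v$ and its direct neighbors: $v$ occurs once, twice or four times according to whether it lies above both of $f,s$, between them, or below both, and the leaf labels are the bit set at the \emph{deeper} of $f,s$. Both pieces of information are themselves instances of \ord, so the local-replacement principle you invoke does not apply. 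In fact your map is provably not \aco-computable: from your output one can read off in \aco which of $f,s$ is deeper (count the copies of $s$, or look at which variable labels the branching node closest to the leaves), so an \aco construction of your trees would place \ord itself in \aco, contradicting its \Logspace-completeness under \aco reductions (\aco$\,\neq\,$\Logspace unconditionally, by \compclass{PARITY}). Your fallback --- \enquote{the output is the product of $G$ with a constant-size controller} --- does not repair this: because the controller's transitions at $f$ and $s$ \emph{reset} the bit rather than permute the state space, they are not injective, so two distinct product nodes share a child; the product is a DAG with shared subtrees, not a tree. The paper insists on the tree representation of \bindd and explicitly flags the absence of sharing as crucial; outputting a DAG is not a valid instance of \obddequ as defined.

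The paper's own proof avoids exactly this trap by never branching \emph{inside} the line at all. It takes three unbranched \texttt{DontCare} copies of the line, rewires them at $f$ by a $3$-cycle and at $s$ by a transposition (two non-commuting permutations), labels the three exits $\one,\zero,\zero$, and confines all \texttt{IfThenElse} branching to a constant-size two-variable gadget at the root that selects among the three begin vertices. Since permutations are injective, the product of the line with this controller is a disjoint union of paths --- no merging, hence a genuine tree --- and every replacement is decided purely locally, so the transformation really is in \aco. Equivalence of the rewired and non-rewired trees then detects whether the $3$-cycle or the transposition is applied first, i.e., whether $f<s$. Replacing your bit-resetting controller by such a permutation controller (this is the permutation branching program idea of Barrington that the paper's remark alludes to) is precisely the idea missing from your argument.
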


\begin{proof}
	Again we are going to build a local graph transformation that is in \aco.
	
	First, we assume \wloss that the begin $b$ and the exit $e$ vertices of $G$ are different from $f$ and $s$. We write
	$f^+$ and $s^+$ the vertices immediately after $f$ and $s$ in $G$.
	
	Then, we perform a first transformation by replacing the graph with three copies of itself
	(this can be done by locally scanning the graph and create labeled copies of the vertices and edges).
	We write $x_i$
	to refer to the copy of the vertex $x$ in the graph $i$.
	The second transformation is a rewiring of the graph as follows: erase the edges going out of the $f_i$
	and $s_i$ and replace them as pictured in the two first subgraphs:
\newcommand{\compress}{\vspace{-2pt}}
\compress
\begin{center}
	\begin{tikzpicture}
\tikzstyle{localstyle}=[draw=black,thick,->]
\tikzset{every node/.style={inner sep=1pt,font=\small}}
	\matrix(m)[row sep=15pt,column sep=10pt]{%
  \node(f1){$f_1$}; & \node(f2){$f_2$}; & \node(f3){$f_3$}; 
&&&\node(s1){$s_1$}; & \node(s2){$s_2$}; & \node(s3){$s_3$}; \\
  \node[inner sep=-1pt](f1+){$f_1^+$}; & \node[inner sep=-1pt](f2+){$f_2^+$}; & \node[inner sep=-1pt](f3+){$f_3^+$};
&&&\node[inner sep=-1pt](s1+){$s_1^+$}; & \node[inner sep=-1pt](s2+){$s_2^+$}; & \node(s3+){$s_3^+$}; \\
};
\path [localstyle] (f1) to (f2+);
\path [localstyle] (f2) to (f3+);
\path [localstyle] (f3) to (f1+.north east);
\path [localstyle] (s1.-50) to (s2+);
\path [localstyle] (s2.-140) to (s1+);
\path [localstyle] (s3) to (s3+);
	\end{tikzpicture}

\qquad\qquad\qquad
	\begin{tikzpicture}
\tikzstyle{localstyle}=[draw=black,thick,->]
\tikzset{every node/.style={inner sep=0pt,font=\small,text depth=1pt}}
	\matrix(m)[row sep=8pt,column sep=6pt]{%
&&\node(x){$x$};\\
&&&\node[text height=5pt,text depth=2pt](y){$y$}; \\
\node[inner sep=-1pt](b1){$b_1$};&& \node[inner sep=-1pt](b2){$b_2$};&& \node[text height=7pt](b3){$b_3$}; \\};
\path [localstyle] (x.-50) to (y);
\path [localstyle] (x.-130) to (b1);
\path [localstyle] (y) to (b2);
\path [localstyle] (y) to (b3);
	\end{tikzpicture}

\end{center}
\compress
Let us call $G_r$ the rewired graph and $G_n$ the non-rewired graph. To each of them we add two binary nodes
$x$ and $y$ connected to the begin vertices $b_i$ as pictured in the third graph above.
%
%
Then we can produce two corresponding \obdd{} $\phi_r$ and $\phi_n$ by replacing the exit vertices $e_1$, $e_2$, $e_3$ by $\one$, $\zero$, $\zero$ 
respectively, $x$ and $y$ by a $\itef {x}{(\dcare y\cdot)}{(\itef y\cdot\cdot)}$ block of nodes; and
any other $v_i$ vertex by a $\dcare v{\cdot}$. 
It is then easy to see that if $f<s$ in the order induced by $G$ if and only if $\phi_r$ and $\phi_n$ are equivalent.

Let us illustrate graphically what happens in the case where $f<s$: we draw the resulting \obdd{} as a labeled
graph with the convention that a node labeled with $z$ with out-degree $1$ is a $\dcare z\cdot$ and a node labeled
with $z$ with out-degree $2$ is a $\itef z\cdot\cdot$ node with the upper edge corresponding to the \texttt{Then}
branch and the lower edge corresponding to the \texttt{Else} branch.
\begin{center}
\begin{tikzpicture}
	\tikzstyle{localstyle}=[draw=black,thick,->]
\tikzset{every node/.style={inner sep=0.5pt,font=\small,text depth=0pt,text height=4pt}}
	\matrix(m)[row sep=7pt,column sep=15pt]{%
&&\node(b1){$b$}; & \node(d1){$\cdots$}; & \node(f1){$f$};
&&\node(f1+){$f^+$}; & \node(dd1){$\cdots$}; & \node(s1){$s$};
&&\node(s1+){$s^+$}; & \node(ddd1){$\cdots$}; & \node(e1){$\one$}; \\
&\node(y1){$y$};\\
\node(x){$x$};&&\node(b2){$b$}; & \node(d2){$\cdots$}; & \node(f2){$f$};
&&\node(f2+){$f^+$}; & \node(dd2){$\cdots$}; & \node(s2){$s$};
&&\node(s2+){$s^+$}; & \node(ddd2){$\cdots$}; & \node(e2){$\zero$}; \\
&\node(y2){$y$};\\
&&\node(b3){$b$}; & \node(d3){$\cdots$}; & \node(f3){$f$};
&&\node(f3+){$f^+$}; & \node(dd3){$\cdots$}; & \node(s3){$s$};
&&\node(s3+){$s^+$}; & \node(ddd3){$\cdots$}; & \node(e3){$\zero$}; \\
};
\path [localstyle] (b1) to (d1);
\path [localstyle] (d1) to (f1);
\path [localstyle] (b2) to (d2);
\path [localstyle] (d2) to (f2);
\path [localstyle] (b3) to (d3);
\path [localstyle] (d3) to (f3);
\path [localstyle] (f1) to (f2+.west);
\path [localstyle] (f2) to (f3+.west);
\path [localstyle] (f3.north east) to (f1+.south west);
\path [localstyle] (f1+) to (dd1);
\path [localstyle] (dd1) to (s1);
\path [localstyle] (f2+) to (dd2);
\path [localstyle] (dd2) to (s2);
\path [localstyle] (f3+) to (dd3);
\path [localstyle] (dd3) to (s3);
\path [localstyle] (s1) to (s2+.west);
\path [localstyle] (s2) to (s1+.west);
\path [localstyle] (s3) to (s3+.west);
\path [localstyle] (s1+) to (ddd1);
\path [localstyle] (ddd1) to (e1);
\path [localstyle] (s2+) to (ddd2);
\path [localstyle] (ddd2) to (e2);
\path [localstyle] (s3+) to (ddd3);
\path [localstyle] (ddd3) to (e3);
\path [localstyle] (x) to (y1);
\path [localstyle] (x) to (y2);
\path [localstyle] (y1) to (b1);
\path [localstyle] (y1) to (b2);
\path [localstyle] (y2) to (b3);

\end{tikzpicture}

\end{center}
\vspace{-0.5cm}
\end{proof}

\begin{remark}
	The above construction relies on the fact that there are non-commuting permutations on the set of three
	elements: in a sense we are just attributing two non-commuting $\sigma$ and $\tau$ to $f$ and $s$ and make
	sure that the order in which they intervene affects the equivalence class of the resulting \obdd{}. An approach
	quite similar in spirit with the idea of \emph{permutation branching program}~\cite{Barrington1989}.
\end{remark}

We can now extend our chain of reductions with the two new elements from this section
\begin{center}
	\ord\ {\footnotesize(\Logspace-hard)}\quad$\rightarrow$\quad 
	 \obddequ\quad $\rightarrow$\quad 
	 \mallmeq\quad $\rightarrow$\quad \bddequ\ {\footnotesize($\in$\,\Logspace)}
	 %
\end{center}
so in the end we get our main result:

\begin{theorem}[\Logspace-completeness]
	The decision problems
	\obddequ, \mallmeq and \bddequ are
	\Logspace-complete under \aco reductions.
\end{theorem}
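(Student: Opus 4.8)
The plan is to assemble the chain of \aco reductions established over the preceding sections, namely \ord $\rightarrow$ \obddequ $\rightarrow$ \mallmeq $\rightarrow$ \bddequ, and to bracket it with the two endpoint facts proved along the way: that \ord is \Logspace-complete under \aco reductions, and that \bddequ lies in \Logspace. From these, \Logspace-completeness of all three problems in the middle follows by transporting membership leftward and hardness rightward along the chain.

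For membership I would argue from the right end. We already know \bddequ $\in$ \Logspace. Since every \aco reduction is in particular computable in logarithmic space, precomposing the decision procedure for \bddequ with the \aco reduction \mallmeq $\rightarrow$ \bddequ yields a logarithmic-space decision procedure for \mallmeq; applying the same argument one step further places \obddequ in \Logspace. (Membership of \obddequ can in fact be read off directly, as an \obdd{} is a special case of a \bdd, so that \obddequ is just a restriction of \bddequ.)

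For hardness I would argue from the left end. Since \ord is \Logspace-hard under \aco reductions, every \Logspace problem reduces to it by an \aco reduction; composing any such reduction with the \aco reduction \ord $\rightarrow$ \obddequ exhibits \obddequ as \Logspace-hard. Pushing the same reasoning through the two remaining reductions of the chain makes \mallmeq and then \bddequ \Logspace-hard as well. Combined with the membership just established, each of the three problems is \Logspace-complete under \aco reductions, which is the claim.

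The only delicate point --- and the one I expect to be the real, if mild, obstacle --- is the legitimacy of chaining: one must invoke that \aco reductions are closed under composition, so that hardness genuinely propagates along the whole chain and the composite remains an \aco reduction rather than degrading into a weaker notion of reduction. This is a standard property of (uniform) \aco reductions, but it is exactly what underwrites the assembly above; together with the inclusion of \aco inside \Logspace used for membership, it completes the argument.
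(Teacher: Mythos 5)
Your proposal is correct and matches the paper's own (largely implicit) argument: the theorem is obtained precisely by assembling the chain \ord\ $\rightarrow$ \obddequ\ $\rightarrow$ \mallmeq\ $\rightarrow$ \bddequ, with \Logspace-hardness of \ord\ at one end and membership of \bddequ\ in \Logspace at the other, propagating hardness rightward and membership leftward via composition of \aco reductions and the inclusion \aco $\subseteq$ \Logspace. The standard closure facts you flag as the delicate point are exactly what the paper takes for granted, so your write-up is simply a more explicit rendering of the same proof.
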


\section{Conclusion}
	In this work, we characterized precisely the complexity of proof equivalence in \mallm as
\Logspace-complete, contrasting greatly with the situation for the \mll fragment which has a \Pspace-complete
equivalence problem. We did so by establishing a correspondence between \mallm proofs and specific classes of \bdd.

This path we took for proving our result is interesting in itself since the established correspondence
allows a transfer of ideas in both directions. In particular, any progress in the theoretical problem of finding a
correct notion of proofnet for \mallm would yield potential applications to \bdd, a notion of widespread
practical use.
An idea to explore might be the notion of \emph{conflict net} defined by D.~Hughes in an unpublished 
note~\cite{Hughes2008}.
Roughly speaking, the principle is to consider a presentation of proofnets with the
information of the links that cannot be present at the same time, rather than giving an explicit
formula to compute their presence, as it is the case with monomial proofnets or the \bdd slicings we introduced.

On the other hand, since many optimization problems regarding \bdd are known to be \NP-complete,
a finer view at the encoding of \cref{sec_reduce} in addition to basic constraints about what we
expect from a notion of proofnet should lead to an impossibility result, even though the equivalence
problem for \mallm is only \Logspace-complete.


\bibliographystyle{plain}
\bibliography{biblio}

\end{document}